\theoremstyle{definition}\newtheorem{Df}{Definition}
\theoremstyle{plain}\newtheorem{Th}{Theorem}
\theoremstyle{definition}\newtheorem{Rm}{Remark}
\theoremstyle{definition}
\theoremstyle{plain}
\theoremstyle{plain}\newtheorem{Co}[Th]{Corollary}
\theoremstyle{plain}\newtheorem{Lm}[Th]{Lemma}
\begin{document}
\begin{frontmatter}
\title{ \Large{Power of the interactive proof systems with verifiers
                modeled by semi-quantum two-way finite automata}}

\author{Shenggen Zheng$^{1,2}$}

\author{Daowen Qiu$^{1,3,}$ \corref{one}}

\author{Jozef Gruska$^{2}$}

 \cortext[one]{Corresponding author.\\ \indent{\it E-mail address:} issqdw@mail.sysu.edu.cn (D. Qiu).}

\address{

  $^1$Department of Computer Science, Sun Yat-sen University,
Guangzhou 510006,
  China\\
   $^{2}$Faculty of Informatics, Masaryk University, Brno 60200, Czech Republic\\
$^3$SQIG--Instituto de Telecomunica\c{c}\~{o}es, Departamento de Matem\'{a}tica, Instituto Superior T\'{e}cnico,
 Universidade de Lisboa, Av. Rovisco Pais 1049-001, Lisbon, Portugal\\
}

\begin{abstract}

{\em Interactive proof systems} (IP) are very powerful~--~languages they can accept form exactly PSPACE. They represent also one of the very fundamental concepts of theoretical computing and a model of computation by interactions. One of the key players in IP are verifiers. In the original model of IP whose power is that of PSPACE, the only restriction on verifiers is that they work in randomized polynomial time. Because of such key importance of IP, it is of large interest to find out how powerful will IP be when verifiers are more restricted. So far this
was explored for the case that verifiers are {\em two-way probabilistic finite automata} (Dwork and Stockmayer, 1990) and {\em one-way quantum finite automata} as well as {\em  two-way quantum finite automata } (Nishimura and Yamakami, 2009).
IP in which verifiers uses {\em public randomization} is called {\em Arthur-Merlin proof systems} (AM).
 AM with  verifiers modeled by Turing Machines augmented with a fixed-size quantum register  (qAM) were studied  also by Yakaryilmaz (2012). He proved, for example,  that  an NP-complete language $L_{knapsack}$, representing the 0~-~1 knapsack problem,  can be recognized by a qAM whose verifier is a {\em two-way finite automaton}  working on  quantum mixed states using superoperators.

In this paper we explore the power of AM for the case that verifiers are {\em two-way finite automata with quantum and classical states} (2QCFA)~--~introduced by Ambainis and Watrous in 2002~--~and the communications are classical. It is of interest to consider AM with such ``semi-quantum" verifiers because they use only limited quantum resources. Our main result is that such Quantum Arthur-Merlin proof systems (QAM(2QCFA)) with polynomial expected running time  are  more powerful than the models in which the verifiers are two-way probabilistic finite automata (AM(2PFA)) with polynomial expected running time. Moreover, we prove that there is a language which can be recognized by an exponential expected running time QAM(2QCFA), but can not be recognized by any AM(2PFA), and that the NP-complete language $L_{knapsack}$ can also be recognized by a QAM(2QCFA) working only on quantum pure  states using unitary operators.

\end{abstract}

\begin{keyword}
Quantum computing \sep quantum finite automata \sep quantum
Arthur-Merlin proof systems \sep two-way finite automata with quantum and classical states.

\end{keyword}
\end{frontmatter}

\section{Introduction}

An important way to get deeper insights into the power of various quantum resources and operations is to explore the power of various quantum variations of the basic models of classical automata. Of a special interest is to do that for various quantum variations of the classical finite automata, especially for those that use limited amounts of always expensive quantumness~--~quantum  resources: states, correlations, operations and measurements. This paper aims to contribute to such a line of research.

There are two basic approaches toward how to introduce quantum features to
classical models of finite automata. The first one is to consider quantum
variants of the classical {\em one-way (deterministic) finite automata}
(1FA or 1DFA) and the second one is to consider quantum variants of the
classical {\em two-way finite automata} (2FA or 2DFA). Already the very first
attempts to introduce such models, by Moore and Crutchfields \cite{Moo97} as well as
Kondacs and Watrous \cite{Kon97} demonstrated that in spite of the fact that in the
classical case, 1FA and 2FA have the same recognition power, this is not so for their quantum
variations (in case only unitary operations and projective measurements are considered as quantum operations).
Moreover, already the first model of {\em two-way
quantum finite automata} (2QFA), namely that introduced by Kondacs and Watrous,
demonstrated that quantum variants of 2FA are much too
powerful~--~they can recognize even some {\em non-context free languages} and
are actually not really finite  in a strong sense \cite{Kon97}. It started to be
therefore of interest to introduce and explore some ``less quantum"
variations of 2FA and their power \cite{Amb06,Amb98,Amb02,Bro99,Ber03,LiQiu09,Mat12,Mer06,Yak10,Yak11}.

A ``hybrid" quantum variation of 2FA, namely, {\em two-way
finite automata with quantum and classical states}  (2QCFA) was
introduced by Ambainis and Watrous \cite{Amb02}.  Using this model they were able
to show, in an elegant way, that already an addition of a single qubit to a
classical model can much increase its power. A 2QCFA is
essentially a classical 2FA augmented with a quantum memory of constant
size (for states of a fixed Hilbert space) that does not depend on the
size of the (classical) input. In spite of such a restriction, 2QCFA have
been shown to be more powerful than {\em two-way probabilistic finite automata}
(2PFA) \cite{Amb02,ZhgQiu11,Zhg12}.

In mid 1980s, Babai \cite{Bab85} and Goldwaser et al.  \cite
{Gol89}, independently, introduced
so-called {\em interactive proof systems} with unlimited power provers and polynomial power randomized verifiers. A famous result of \cite{Sha92}, stated as IP=PSPACE,
that languages recognized by IP are exactly those from PSPACE, demonstrated
enormous power hidden in simple interactions of IP.

It is therefore natural to explore power also of some weaker variations of IP.
Since unlimited power of provers seems to be very essential for the whole concept of IP,
the research started to focus on the cases with limited power verifiers. This has been done at first by Dwork and Stockmeyer \cite{DwS92}~--~they explored the case that verifiers are {\em two-way probabilistic finite automata} (IP(2PFA)). They showed that every language in the class EXP can be accepted by some IP(2PFA). However, the set of languages recognized by such IP in which verifiers use {\em public randomization} (also called Arthur-Merlin proof systems) is a proper subset of P.
Later, Nishimura and Yamakami \cite{Nis09} explored the case that verifiers are modeled by {\em one-way quantum finite automata} as well as {\em two-way quantum finite automata} and demonstrated strengths and weaknesses of both IP.

Of importance is also a variant of IP, called {\em Arthur-Merlin proof systems} (AM). The difference between IP and AM is that
the prover of IP has at each step only {\em partial information} of the
configuration of the verifier while the prover of AM always has {\em complete information}  of the current
configuration of the verifier.
Also for such {\em interactive proof systems} it is of importance to explore their
power for the case that verifiers have a more limited power and to find out relations between IP and AM with verifiers of different power. AM with  verifiers modeled by Turing Machines augmented with a fixed-size quantum register  (qAM) were studied also  in \cite{Yak12a,Yak12} and it was shown that the an NP-complete language $L_{knapsack}$, representing the 0~-~1 knapsack problem,  can be recognized by a qAM whose verifier is a {\em two-way finite automaton} working on  quantum mixed states  using superoperators. In Yakaryilmaz's notation, {\em two-way finite automata} working on  quantum mixed states  using  superoperators are called 2QCFA. However, 2QCFA as defined originally  in \cite{Amb02}, are working only on quantum pure  states using unitary operators. They  can be simulated efficiently by  {\em two-way finite automata} working on  quantum mixed states, but  whether  {\em two-way finite automata} working on  quantum mixed states can be simulated by  2QCFA, or not, is unknown. The model of 2QCFA we use is that of \cite{Amb02} and it is weaker, actually a special case of the model used in \cite{Yak12a,Yak12}. Our results concerning the acceptance of the language $L_{knapsack}$ are therefore stronger. It is also worth mentioning that a notion of QMA for quantum-automata verifiers was introduced in \cite{NY04,Nis09,NY14} (under the name ``public QIP").

Our model will be denoted as QAM(2QCFA). One can see this model also as a classical AM
augmented with a quantum memory of constant size~--~to store quantum states of a fixed
Hilbert space~--~that does not depend on the size of the (classical) input.
Our main results show that such models are more powerful than AM(2PFA)~--~that is AM with 2PFA as verifiers, and the NP-complete language can be recognized by QAM(2QCFA).

The paper is structured as follows. In Section 2 all models involved are
described in detail. After that we show for the language $L_{middle}=\{xay\,|\, x,y \in \{a,b\}^*, |x|=|y|\}$ that for any $0\leq\varepsilon<1/2$ there is a QAM(2QCFA) $A(P, V_{\varepsilon})$~--~with the prover $P$ and the verifier $V_{\varepsilon}$ that accepts $L_{middle}$ with one-sided error $\varepsilon$ in a polynomial expected running time~--~notation QAM(\mbox{ptime-2QCFA}). This language cannot be recognized by any AM(2PFA) in polynomial expected running time, as shown in \cite{DwS92}. As we will show in the paper, for the language $L_{mpal}=\{xax^R\,|\, x\in \{a,b\}^* \}$, that for any $0\leq\varepsilon<1/2$ there is a QAM(2QCFA) $A(P,V_{\varepsilon})$ that can recognize $L_{mpal}$ with one-sided error $\varepsilon$ in an exponential expected running time. We will prove that this language cannot be recognized at all by an AM(2PFA). These results show that QAM(2QCFA) are more powerful than AM(2PFA).  Afterwards we show that there is an NP-complete language, namely $L_{knapsack}$,  representing the 0~-~1 knapsack problem, that can be recognized by QAM(2QCFA) in an exponential expected running time. Finally, we discuss languages, $L_1=\{w\,|\, \exists s,t,u,v\in \{a,b\}^*,\ w=sbt=ubv,|s|=|v|\}$ and $L_2=\{w\,|\, \exists s,t,u,v\in \{a,b\}^*,\ w=sat=ubv,|s|=|v|\}$, that can be recognized by QAM(ptime-2QCFA). The language $L_1$ is proved to be nonstochastic. The language $L_2$, the set of nonpalindromes, is stochastic \cite{Fre10}.

The fact that the {\em non-regular language} $L_{middle}$ can be recognized by a QAM(ptime-2QCFA)
and it seems that it can not be recognized by a 2QCFA, indicates that QAM(ptime-2QCFA) are likely more powerful than  2QCFA. Interestingly enough, this situation seems to be different for 2PFA. It is still an open problem to find out whether there is a {\em non-regular language} that can be recognized by AM(ptime-2PFA), but we know that any 2PFA needs  exponential time to recognize a {\em non-regular language} \cite{DwS90,Gre86} .

\section{Basic models}
At first we introduce formally the model IP(2PFA) and afterwards also the model
QIP(2QCFA).
Concerning basics of quantum computation we refer the reader to \cite{Gru99,Nie00},
and concerning basics of classical and quantum automata we refer the reader to \cite{Gru99,Gru00,Hop79,Paz71,Qiu12}.

\subsection{Model IP(2PFA)}

Notation: A {\em coin-tossing distribution} on a finite set $S$ is a mapping $\phi:S\to \{0,1/2,1\}$ such that $\sum_{s\in S}\phi(s)=1$.

\begin{figure}[htbp]
  \centering\epsfig{figure=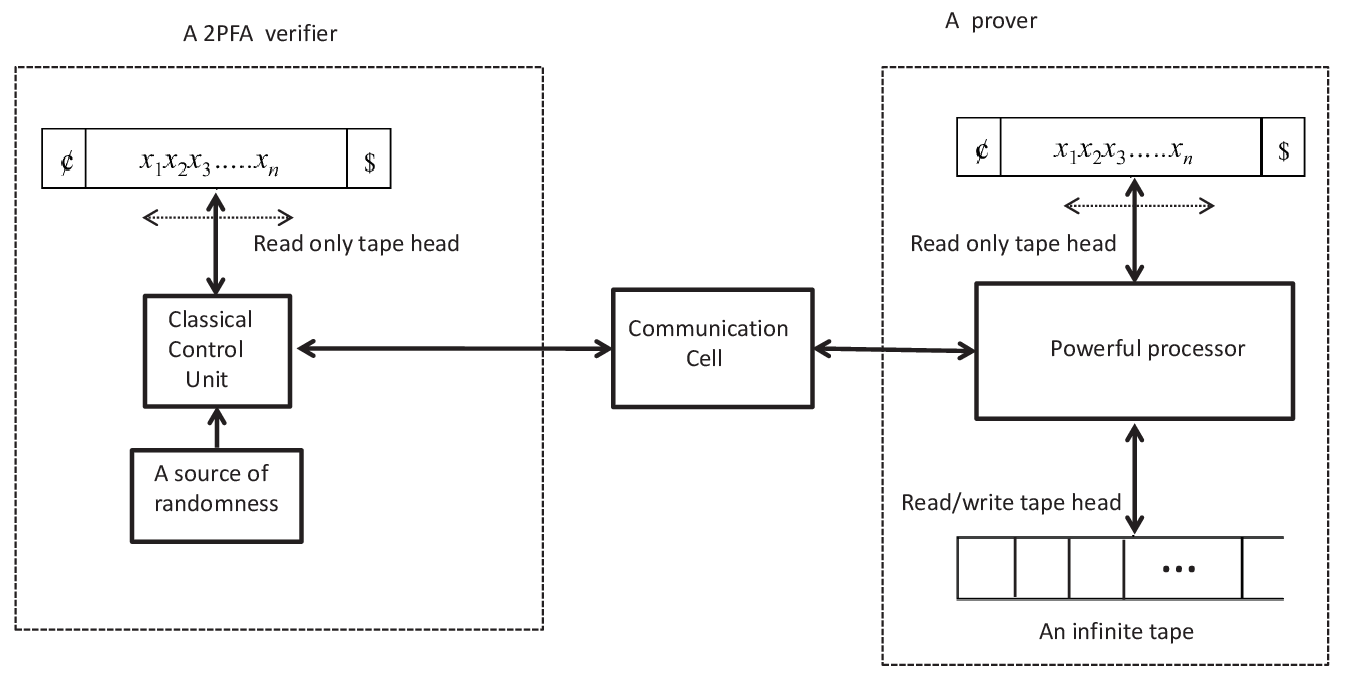,width=130mm}\\
 \centering\caption{A schematic model of an IP(2PFA)}\label{f1}
\end{figure}

\begin{Df}

An IP(2PFA) $A=(P,V)$, where $P$ is a so-called {\em prover}
and $V$ is a so called {\em verifier} (that are specified bellow and communicate through
a communication cell as illustrated in Figure \ref{f1}).

An action of $A$ starts with a step of the verifier. The verifier's head scans the left end-marker, and the verifier writes a symbol to the communication cell. That is followed by an action of the prover who writes its response into the communication cell. Such rounds of steps continue until the verifier decides to end it.

 The verifier $V$ is a 2PFA specified as follows:
$V=(S, \Sigma,\delta, s_0, S_{acc}, S_{rej})$, where

\begin{enumerate}

\item $S$ is a finite set (of classical states) partitioned into subsets $R, C, H$ of reading, communication and halting states, respectively.

\item $s_0\in R$ is the initial reading state.

\item $H$ is partitioned into sets $S_{acc}$ and $S_{rej} $ of the accepting and rejecting states.

\item $\Sigma$ is a finite input alphabet that is extended into the alphabet $\Sigma'=\Sigma\cup\{\
|\hspace{-1.5mm}c,\$\}$, where $\ |\hspace{-1.5mm}c\notin \Sigma $
will be used as the left end-marker and $\$\notin \Sigma$ will be used as the
right end-marker. $\Gamma$ is a communication alphabet (shared by both the verifier and the prover).

\item $\delta$ is a transition mapping defined as follows:
\begin{enumerate}
\item For each reading state $s\in R$ and $\sigma\in \Sigma'$, $\delta(s,\sigma)$ is a coin-tossing distribution on $S\times \{-1,0,1\}$, where $-1, 0$ and $1$ specify the tape head move~--~one cell left, no move, one cell right. It is assumed that $\delta$ is well defined in the sense that
on end-markers it never determines a move outside of the tape region separated by end-markers $
|\hspace{-1.5mm}c$ and $\$$.  Therefore $\delta$ specifies, for each state of the control unit and each symbol on the tape, the probability  that
the control unit will be in a particular new state and that the tape head moves in a particular direction.

 \item With each communication state $s\in C$, a unique communication symbol $\gamma_s$ is associated.
In the case the verifier is in such a communication state $s$, then it writes $\gamma_s$ in the communication cell and then the prover gets into an action. It reads the symbol stored in the communication cell and, depending on the whole communication history and the input string (that is the same for both the prover and the verifier), the prover writes a symbol $\gamma\in \Gamma$ into the communication cell. Afterwards the verifier gets into the action that depends on its current state and on the symbol in the communication cell. For each pair $s\in S$ and $\gamma\in \Gamma$ a coin-tossing distribution function on the set $S$ is defined that determines the probability for each state from $S$ that it should be the next state of the verifier. At such after-communication step the head of the verifier does not move.
\item The verifier halts when it gets into a halting state. Therefore
there is no need to define $\delta$ on states in $H$.

\end{enumerate}
\end{enumerate}

The prover $P$ is all-powerful and at each communication step what the prover writes into the communication cell depends on the whole input and the whole communication history up to that point of communication. Namely, it is determined, for an input string $x$ and communication history $y\in \Gamma^*$ to that point, by a coin-tossing distribution $\rho(x,y)$ defined on $\Gamma$.

Note. The communication cell always holds a symbol from the communication alphabet $\Gamma$.  Whenever the verifier needs some information from the prover, the verifier writes
a request to the communication cell via a symbol from the alphabet $\Gamma$ and the prover responds. The verifier then continues, probabilistically, depending on the prover's respond.

\end{Df}

\begin{Rm}

In the above definition, the prover's understanding of the verifier's computation is only through the communication cell that can contain information only from the finite set $\Gamma$. This mode of communication is called {\em private coins} \cite{Gol86} or {\em partial information} \cite{Con88}. One can consider also IP in which the prover has complete information on the current configuration of the verifier. Such a communication mode is called {\em public coins}  \cite{Gol86} or  {\em complete information} \cite{Con88} mode. In Babai's terminology, an IP with the public  coins communication mode is called Arthur-Merlin proof system (AM). In an AM(2PFA), the verifier in each step automatically sends its configuration information, that is the communication symbol corresponding to the communication alphabet contains an element of $S\times \{-1,0,1\}$, that is, the current state and the last move of the head, to the prover through the communication cell in every computation step.  This information is sufficient for the all-powerful prover to keep the track of the configuration of the verifier because the prover knows the strategy of the verifier.
\end{Rm}

The computation of $(P,V)$ on an input string $w$ starts with the string $|\hspace{-1.5mm}cw\$$  on the input tapes of both the verifier and the prover. The tape head of the verifier is positioned on the left end-marker $|\hspace{-1.5mm}c$ and the verifier begins to act with the initial state $s_0$ in its control unit.
The action of the verifier and the prover in the next steps is then governed (probabilistically) by the transition functions $\delta$ and $\rho$, as defined above, until the verifier enters a halting state. For a particular input $w$ and a halting state $s$ let $Pr_{(P,V)}(w,s)$ be the probability that IP $(P,V)$ halts its computation on $w$ in the state $s$. The probability that the verifier halts in a halting state $s$ on the input $w$ is taken over all random choices of the verifier and the prover. If $s$ is in an accepting (a rejecting) state, then the input is accepted (rejected).

The prover-verifier pair
$(P,V_{\varepsilon})$ is an AM(2PFA) for (accepting) a language $L\subset \Sigma^*$ with an error probability $\varepsilon<1/2$ if
\begin{enumerate}
\item {\bf (Completeness condition)}: for all $w\in L$, $Pr[(P,V_{\varepsilon})\ \mbox{ accepts}\ w]\ge 1-\varepsilon$, and
\item {\bf (Soundness condition)}: for all $w\not\in L$ and any prover $P^*$, $Pr[(P^*,V_{\varepsilon}) \ \mbox{rejects}\ w]\ge 1-\varepsilon$.
\end{enumerate}

We say that a language $L$ is recognized by AM(2PFA) if for some $\varepsilon<1/2$, there is an AM(2PFA) $(P, V_{\varepsilon})$ that accepts the language $L$ with the error probability
$\varepsilon$.

\subsection{ Model QIP(2QCFA)}

2QCFA were introduced by Ambainis and Watrous \cite{Amb02} and explored by Qiu, Yakaryilmaz and others \cite{Qiu08,Yak10,ZhgQiu11,Zhg12}. Informally, a 2QCFA can be seen as a 2DFA with an access to a quantum memory for states of a fixed Hilbert space upon which at each step either a unitary operation is performed or a projective measurement and the outcomes of which then probabilistically determines the next move of the underlying 2DFA.

A quantum analogue of IP  was introduced by Watrous in 2003 \cite{Wat03} and since then the study of QIP has become an interesting and important topic of quantum complexity theory. In particular, quantum analogues of Babai's {\em Arthur-Merlin proof system} called now {\em quantum Arthur-Merlin proof system} (QAM) have drawn a significant attention \cite{Wat09,Yak12,Yak12a}. In \cite{Yak12,Yak12a}, for Arthur-Merlin IP verifiers are augmented by a fixed-size quantum memory. In \cite{NY04,Nis09,NY14} {\em one-way quantum finite automata} (both of the {\em measure-many} and {\em measure-once} types) and {\em two-way quantum finite automata} are considered as verifiers.

In this paper, such IP are mainly considered in which verifiers are not 2PFA, as in \cite{NY04,Nis09,NY14},
but 2QCFA  and the verifier and the prover have just classical communication.
More exactly, verifiers are augmented by a quantum memory size of which does not depend on the size of the input. The formal definition is as follows:

\begin{figure}[htbp]
 \centering\epsfig{figure=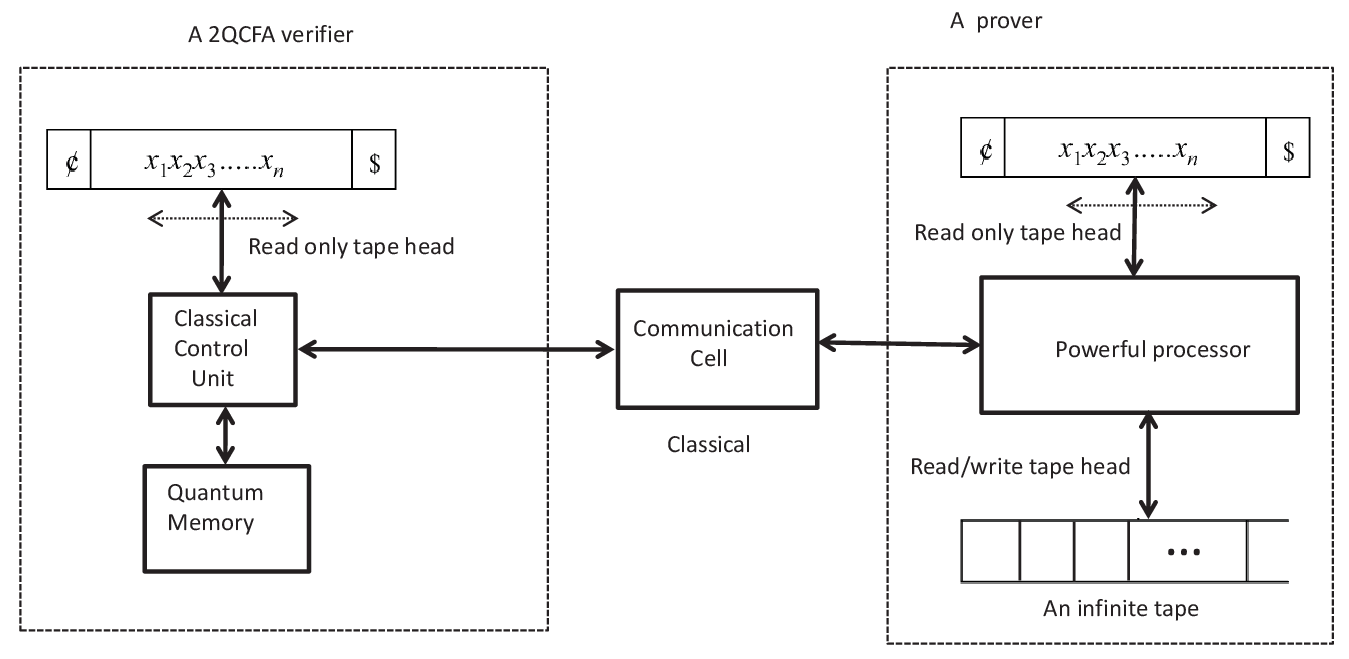,width=130mm}\\
  \centering\caption{A schematic model of a QIP(2QCFA)}\label{f2}
\end{figure}

\begin{Df}
A QIP(2QCFA) is given by a pair $(P,V)$, where $P$ is a prover and $V$ is a verifier, as illustrated in Figure \ref{f2}, and the prover communicate classically with the verifier  through a special communication cell.

In such a QIP(2QCFA) $(P,V)$, the verifier $V$ is a 2QCFA $A=(Q,S,\Sigma,\Gamma,  \Theta, \delta, |q_0\rangle, s_0, S_{acc}, S_{rej})$, where

\begin{enumerate}
\item Q is a finite set of an $n$ basic orthonormal quantum states.
\item S is a finite set of classical states that is partitioned into subsets $R, C, H$ of the reading, communication and halting states..
\item $\Sigma$ is a finite set of input symbols extended to the tape symbols
$\Sigma'=\Sigma\cup \{|\hspace{-1.5mm}c,\$\}$, where $\ |\hspace{-1.5mm}c$ is used as the left end-marker and $\$$ is used as the right end-marker.
\item $\Gamma$ is a finite set of communication symbols.
\item $|q_0\rangle\in Q$ is the initial quantum state.
\item $s_0$ is the initial classical state.
\item The set $H$ is partitioned into subsets $S_{acc}$ and $S_{rej}$ of the accepting and rejecting states.
\item $\Theta$ is a mapping
$\Theta: R\times \Sigma'\to U(H(Q))\cup O(H(Q))$,
where U(H(Q)) and O(H(Q)) are sets of unitary operations and projective measurements on the Hilbert space generated by quantum states from $Q$.
\item $\delta$ is a classical transition function defined as follows.

\begin{enumerate}
\item For each reading state $s\in R$ and a tape symbol $\sigma$
\begin{enumerate}
\item If $\Theta(s,\sigma)\in U(H(Q))$, then the unitary operation $\Theta(s,\sigma)$ is applied on the current state of quantum memory to produce  a new quantum state and, in addition, if
$$\delta: R\times \Sigma'\to S\times \{-1, 0,1\},$$
then, for the case $\delta(s,\sigma)=(s',d)$, the new classical state of the verifier is $s'$ and its head moves in the direction $d$.
\item If $\Theta(s,\gamma)\in O(H(Q))$, then $\Theta(s,\sigma)$ is a projective measurement with a set of possible eigenvalues
$E_{\Theta(s,\lambda)}=\{\lambda_1,\ldots, \lambda_n\}$ and projectors $\{P_1,\ldots, P_n\}$, where $P_i$ is the projector onto the eigenspaces generated by eigenvectors corresponding to $\lambda_i$. In such a case
$$\delta_{(s,\sigma)}: E\to S\times\{-1,0,1\}$$
and $\delta_{(s,\sigma)}(\lambda_i) =(s',d)$ means that when the projective measurement outcome is $\lambda_i$, then the new classical state is $s'$ and its head moves in the direction $d$.
\end{enumerate}

\item With each communicating state $s\in C$ a communication symbol $\gamma_s$ is associated. For each $s\in C$ and $\gamma\in \Gamma$, $\delta(s,\gamma)\in S$.
This has the following meaning:
If the verifier gets into a state $s\in C$, it writes $\gamma_s$ into the communication cell and then the prover, depending on the whole communication history and on the input $w$ writes a $\gamma\in \Gamma$ into the communicating cell and the verifier comes into the action. If $\delta(s,\gamma)=s'$, then $s'$ is the new state of the verifier and the tape head of the verifier does not move.
\item The verifier halts and accepts (rejects) the input when it enters a classical accepting (rejecting) state from $H$.
\end{enumerate}

\end{enumerate}

The prover is a processor of an unlimited power as in the general case of IP.

\end{Df}

\begin{Rm}
 A QIP(2QCFA) is a QAM(2QCFA) if the verifier sends, at the end of each its communication round through a communication symbol information about its current configuration, that is an element of  $S\times \{-1,0,1\}\times E$, where $E$ is the set of all possible measurement outcomes,  that is the communication symbol contains the current classical state, the last move of the head and the projective measurement outcome (let a special symbol is used if there was no measurement in the computation step), to the prover through the communication cell in every computation step.  This information is sufficient for the all-powerful prover to keep the track of the configuration of the verifier because the prover knows the strategy of the verifier.
\end{Rm}

The computation of QIP(2QCFA) with a prover $P$ and a verifier $V$ on an input $w\in \Sigma^*$ starts with the string $|\hspace{-1.5mm}cx\$$ on the input tapes of both of them. At the start, the tape head of the verifier is positioned on the left end-marker and the verifier begins the computation in the classical initial state and
in the initial quantum state.
In each of the next steps, if the current classical state of the verifier is $s$ and the current quantum state of the verifier is $|\psi\rangle$ and the scanning symbol is $\sigma$, then the quantum and classical states are changed according to $\Theta(s,\sigma)$ and $\delta$ as follows.
\begin{enumerate}

\item If $s\in R$, then:

\begin{enumerate}
\item If $\Theta(s,\sigma)$ is a unitary operation $U$, then $U$ is applied on the current quantum state $|\psi\rangle$, changing it into the quantum state $U|\psi\rangle$, and
$\delta(s,\sigma)=(s',d)$ makes $s'$ to be a new classical state and the head moves in the direction $d$. If $s\in S_{acc}$ ($s\in S_{rej}$),
then the input is accepted (rejected).
\item If $\Theta(s,\sigma)$ is a projective measurement, then the current quantum state is changed to $P_j|\psi\rangle/||P_j|\psi\rangle||$ with the probability $||P_j|\psi\rangle||^2$ and in such a case $\delta_{(s,\sigma)}$ is a mapping from the set of potential classical outcomes (eigenvalues) of the measurement to $S\times \{-1,0,1\}$.  In particular, if the measurement coutcome is $\lambda_i$ and
     $\delta_{(s,\sigma)}(\lambda_i)=(s',d)$, then:

\begin{enumerate}
\item if $s'\in S-H$, then $s'$ is the new classical state and the head moves in the direction $d$;
\item if $s'\in S_{acc}$ ($s'\in S_{rej}$), then the verifier accepts (rejects) the input and computation halts.
\end{enumerate}
\end{enumerate}

\item If the current classical state  $s\in C$, then the verifier sends $\gamma_s$ to the prover through the communication cell. Depending on this and all previously obtained symbols from the verifier, as well as on the input string, the prover sends to the communication cell a symbol $\gamma$.  If
$\delta(s,\gamma)=s'$, then $s'$ will become the new classical state of the verifier and the input head does not move. In case $s'\in S_{acc}$ ($s'\in S_{rej}$) the input is accepted.
\end{enumerate}

The probability that an input word is accepted is defined in a similar way as in the case of 2QCFA.

The prover-verifier pair $(P,V_{\varepsilon})$ is a QAM(2QCFA) for (accepting) a language $L\subset\Sigma^*$ with one-sided error $0\leq\varepsilon<1/2$ if
\begin{enumerate}
\item {\bf Completeness condition}: for all $w\in L$, $Pr[(P,V_{\varepsilon})\ \mbox{ accepts}\ w]=1$, and
\item {\bf Soundness condition}: for all $w\not\in L$ and any prover $P^*$, $Pr[(P^*,V_{\varepsilon}) \ \mbox{rejects}\ w]\ge 1-\varepsilon$.
\end{enumerate}

We say that a language $L$ is recognized by QAM(2QCFA) if for some $\varepsilon<1/2$ there is a QAM(2QCFA) $(P,V_{\varepsilon})$ that accepts the language $L$ with one-sided error $\varepsilon$.

\section{Examples  of languages recognized by QAM(2QCFA)}
In this section we provide a detailed proof for five languages that they are
accepted by QAM(2QCFA).

\subsection{Recognition of the language $L_{middle}$}
The importance of the fact that the language $L_{middle}$ can be recognized by a QAM(ptime-2QCFA) is underlined by the fact that there is no AM(ptime-2PFA) for this language \cite{DwS92}.

\begin{Th}\label{L-middle}

For any $\varepsilon<1/2$ there exists a QAM(2QCFA) {\em $A_{\varepsilon}$} with a
verifier-prover pair $(P, V_{\varepsilon})$ for the language
$L_{middle}$ with one-sided error $\varepsilon$ in the expected running time
$\textbf{O}\left(\frac{1}{\varepsilon}\left(n^4+n^2\log{\frac{1}{\varepsilon}}\right)\right)$, where $n$ is the length of the input.
\end{Th}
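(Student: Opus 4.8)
The plan is to let the all-powerful prover locate the (claimed) middle letter and to let the verifier $V_{\varepsilon}$ check two things: that the marked letter is indeed an $a$, and that the block to its left and the block to its right have equal length. The first check is exact; the second is an equality-of-two-counts test realized by a single-qubit rotation, exactly the mechanism Ambainis and Watrous use for $\{a^nb^n\}$ \cite{Amb02}. Concretely, a single pass runs as follows. The verifier starts at the left end-marker with its qubit in $|q_0\rangle=|0\rangle$ and sweeps to the right one cell at a time. Before the marked cell it applies, at each input letter, a fixed rotation $R(\theta)$ by an angle $\theta$ that is an irrational multiple of $\pi$; at every step it is in a communication state and asks the prover whether the current cell is the middle. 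When the prover answers affirmatively, the verifier checks that the scanned letter is $a$, rejecting at once if it is not, and thereafter applies $R(-\theta)$ at each remaining letter until the right end-marker. If $L$ and $R$ denote the numbers of letters strictly to the left and to the right of the marked cell, the qubit ends in $\cos(k\theta)|0\rangle+\sin(k\theta)|1\rangle$ with $k=L-R$. A measurement in the computational basis yielding $|1\rangle$ forces a reject; the outcome $|0\rangle$ lets the verifier continue. If the prover never marks a cell, the verifier rejects at the right end-marker.

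For completeness, when $w=xay\in L_{middle}$ the honest prover marks the true centre; then $k=0$, the qubit returns exactly to $|0\rangle$, and the measurement never yields $|1\rangle$, so $V_{\varepsilon}$ never rejects a word of $L_{middle}$. One then adds an acceptance branch that fires with a suitably small probability each round, so that a word of $L_{middle}$ is accepted with probability $1$ while the error stays one-sided. For soundness, suppose $w\notin L_{middle}$. In any pass the prover commits to some position; if the marked letter is not $a$ the pass rejects with certainty, and otherwise the marked cell cannot be the true centre (either because $|w|$ is even, so no centre exists, or because the centre carries a $b$), whence $k=L-R\neq 0$ with $1\le|k|\le n$. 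The pass therefore rejects with probability $\sin^2(k\theta)$. Since the prover may mark different cells and may adapt to the public measurement history, I would bound the survival probability pass by pass: conditioned on the entire history, each pass still rejects with probability at least $p:=\min_{1\le|k|\le n}\sin^2(k\theta)$, so the probability of surviving $T$ passes is at most $(1-p)^T$.

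The heart of the argument, and the step I expect to be the main obstacle, is the quantitative control of $p$ together with the running-time bookkeeping. I would fix $\theta$ so that a Diophantine (three-distance) estimate gives $\min_{1\le|k|\le n}\sin^2(k\theta)\ge c/n^2$ for a constant $c>0$, making a single pass reject a bad word with probability $\Omega(1/n^2)$; since a pass costs $O(n)$ steps, $O(n^2\log\frac1\varepsilon)$ passes push the acceptance probability of a word outside $L_{middle}$ below $\varepsilon$, accounting for the $n^2\log\frac1\varepsilon$ contribution. The delicate part is the acceptance branch: to accept every $w\in L_{middle}$ with probability $1$ without ever rejecting such a word, the verifier must fire ``accept'' with probability of order $\varepsilon/n^2$ per round, a bias that itself has to be manufactured by a quantum subroutine depending on $n$, and balancing this acceptance rate against the $\Omega(1/n^2)$ rejection rate so as to keep the error one-sided is what produces the remaining $n^4/\varepsilon$ term and the overall expected time $O\!\left(\frac1\varepsilon\left(n^4+n^2\log\frac1\varepsilon\right)\right)$. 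Checking that soundness survives against every prover $P^*$, and that the expected running time is finite and of the stated order on both kinds of input, is where I would expect the calculation to require the most care.
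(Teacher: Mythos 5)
Your proposal is correct and takes essentially the same route as the paper: the prover marks the claimed middle, the verifier rotates its qubit by $+\alpha$ before and $-\alpha$ after (the paper fixes $\alpha=\sqrt{2}\pi$), rejects on measuring $|q_1\rangle$, and balances a per-round acceptance probability of order $2^{-k}/(n+1)^2$ (with $k=1+\lceil\log_2 1/\varepsilon\rceil$) against the $\Omega(1/n^2)$ rejection probability, yielding rejection probability at least $\frac{P_r}{P_a+P_r}>\frac{1}{1+\varepsilon}>1-\varepsilon$ for any prover. The two steps you flag as delicate are resolved exactly as you anticipate: the Diophantine bound is the elementary closest-integer estimate $\sin^2(\sqrt{2}\,d\,\pi)>\frac{1}{2d^2+1}$, and the small acceptance bias is the standard Ambainis--Watrous coin-flip/head-walk gadget, whose $\mathbf{O}(n^2)$ per-iteration cost and $\mathbf{O}(2^k n^2)$ expected repetitions give the stated $\mathbf{O}\left(\frac{1}{\varepsilon}\left(n^4+n^2\log\frac{1}{\varepsilon}\right)\right)$ running time.
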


\begin{proof}

\begin{figure}[htbp]
\centering\epsfig{figure=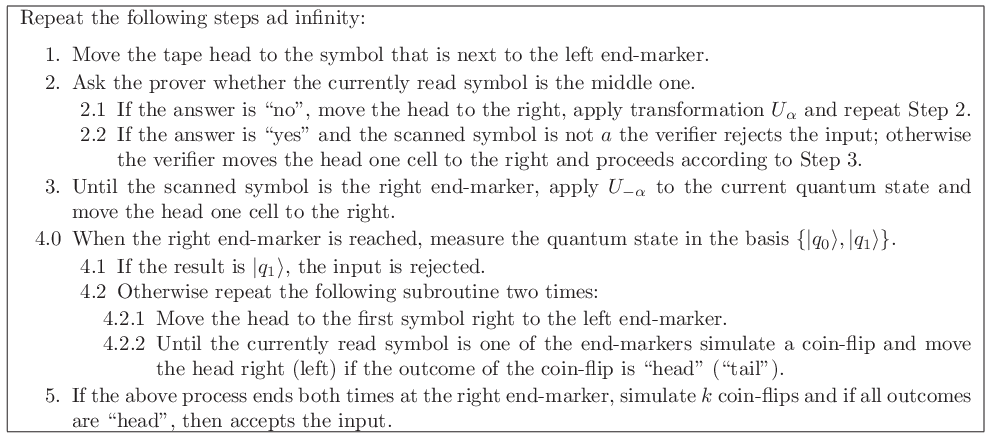,width=140mm}\\
 \centering\caption{ Description of the behavior of the verifier $V_{\varepsilon}$ in a QAM recognizing the language $L_{middle}$. The choice of $k$  depends on $\epsilon$, as discussed in the text. }\label{f3}
\end{figure}

At first we present informally the main idea of the proof. A 2QCFA
verifier $V_{\varepsilon}$ uses quantum memory with states generated by two orthogonal quantum states $|q_0\rangle$ and $|q_1\rangle$, where $|q_0\rangle$ will be the initial quantum state.
At the  beginning the input is shared by both the verifier and the prover.
In the first part of the interaction/computation process, until the middle symbol is reached, for each input symbol $\sigma$ the verifier does the following.

It applies the following unitary transformation $U_{\alpha}$, $\alpha=\sqrt{2}\pi$,
\begin{equation}
U_{\alpha}|q_0\rangle=\cos \alpha |q_0\rangle+\sin \alpha|q_1\rangle,\ \ \ U_{\alpha}|q_1\rangle=-\sin \alpha |q_0\rangle+\cos \alpha|q_1\rangle,
\end{equation}
 to the current quantum state (that is the state is rotated by the angle $\alpha$)
and asks the prover whether the next symbol is the middle one. The all-powerful
prover provides the answer. If the next symbol is not the middle one, the verifier moves its head one cell to the right and this process repeats. When the symbol in the middle is reached, the verifier checks whether it is the symbol $a$. If not, the verifier rejects the input; otherwise it continues to read the input, symbol by symbol, and each time it
applies the unitary transformation $U_{-\alpha}$ to the current quantum state (and therefore this quantum state is rotated by the angle $-\alpha$) until the right end-marker is reached. (During this process  the verifier has no need to get any information from the prover.)

When the right end-marker is reached, the verifier measures the current quantum state in the basis $\{|q_0\rangle, |q_1\rangle\}$. If $|q_1\rangle$ is the resulting quantum state, the input is rejected, otherwise the verifier proceeds as shown in Figure \ref{f3}.

\begin{Lm}

If the input $w\in L_{middle}$, then the quantum states of the verifier $V_{\varepsilon}$  evolve with certainty into $|q_0\rangle$ when the right end-marker is reached in Step 4.
\end{Lm}
\begin{proof}
If $w\in L_{middle}$, then there are strings $x,y\in\Sigma^*$ such that $|x|=|y|$ and $w=xay$. Since the all-powerful prover can tell the verifier for sure when the middle symbol is reached, the quantum state when the right end-marker is reached is
\begin{equation}
|q\rangle=(U_{-\alpha})^{|y|} (U_{\alpha})^{|x|}
|q_0\rangle=\left(
  \begin{array}{cc}
    \cos \alpha  & -\sin \alpha \\
    \sin \alpha  & \cos  \alpha \\
  \end{array}
\right)^{|y|}\left(
  \begin{array}{cc}
    \cos \alpha  & \sin \alpha \\
    -\sin\alpha  & \cos \alpha \\
  \end{array}
\right)^{|x|}|q_0\rangle
\end{equation}
\begin{equation}
=\left(
  \begin{array}{cc}
    \cos |y|\alpha  & -\sin |y|\alpha \\
    \sin |y|\alpha  & \cos  |y|\alpha \\
  \end{array}
\right)\left(
  \begin{array}{cc}
    \cos |x|\alpha  & \sin |x|\alpha \\
    -\sin |x|\alpha  & \cos |x|\alpha \\
  \end{array}
\right)|q_0\rangle=\left(
  \begin{array}{cc}
    1  & 0 \\
    0  & 1 \\
  \end{array}
\right)|q_0\rangle=|q_0\rangle.
\end{equation}

\end{proof}

\begin{Lm}\cite{Zhg12}
 A coin flipping can be simulated by the verifier $V_{\varepsilon}$ using states
$|q_0\rangle$ and $|q_1\rangle$.
\end{Lm}

\begin{Lm}\cite{Amb02}
 If the input $w\in L_{middle}$, then the execution of loops in the steps 4.2 and 5 leads to the acceptance with the probability $\frac{1}{2^k(n+1)^2}$.
\end{Lm}

It may happen that the process described in Figure \ref{f3} does not terminate. This happens only if the result of the last coin-flip is ``tail" and this happens either when the left end-marker
is reached in Step 4.2 or this happens in Step 5. In such a case the quantum state of the verifier is $|q_0\rangle$ and a new iteration of the process starts.

{\bf Completeness condition:} If $w\in L_{middle}$, then the quantum state of the verifier after Step 4.0 is $|q_0\rangle$ and the input is never rejected in Step 4.1. After the rest of the steps,
in 4.2 and 5, the probability of accepting the input is, if we denote $n=|w|$,
$P_a=1/(2^k(n+1)^2)$ according to the previous lemma, and the probability of rejecting the input is $P_r=0$. If  the whole process is repeated for infinitum, the acceptance probability is
\begin{align}
Pr[(P^*,V_{\varepsilon})\  \text{ rejects}\  w] =\sum_{i\geq
0}(1-P_a)^i(1-P_r)^iP_a
=\frac{P_a}{P_a+P_r-P_aP_r}
=\frac{P_a}{P_a}=1.
\end{align}

{\bf Soundness condition}: Let the input string $w\not \in L_{middle}$ and $n =|w|$. Observe
that the verifier, in its communication with the prover, waits only for the
information that the currently read symbol is in the middle and let this come after reading $m
(0<m<n)$ symbols in the input string (this $m$ can be different at different iterations and for different provers).

\begin{Lm}\label{not-in-L-middle}
If $w\not \in L_{middle}$, then the verifier rejects the input for any prover after Step 4.1  with probability at least $1/(2n^2+1)$.
\end{Lm}

\begin{proof}
Suppose that a prover $P^*$,  after $m$ steps, informs the verifier that the next symbol is the middle one. If $w$ does not have the form $xay$, $|x|=m$, then $w$ is rejected in Step 2.2.

If $w=xay$ with $|x|=m$, then, before the measurement in Step 4.0 the quantum state of the verifier will be
\begin{equation}
|q\rangle=(U_{-\alpha})^{n-m-1} (U_{\alpha})^m|q_0\rangle=\left(
  \begin{array}{cc}
    \cos \alpha  & -\sin \alpha \\
    \sin\alpha  & \cos \alpha \\
  \end{array}
\right)^{n-m-1}\left(
  \begin{array}{cc}
    \cos \alpha  & \sin \alpha \\
    -\sin \alpha  & \cos  \alpha \\
  \end{array}
\right)^m|q_0\rangle
\end{equation}
\begin{equation}
=\left(
  \begin{array}{cc}
    \cos (n-m-1)\alpha  & -\sin (n-m-1)\alpha \\
    \sin (n-m-1)\alpha  & \cos (n-m-1)\alpha \\
  \end{array}
\right)\left(
  \begin{array}{cc}
    \cos m\alpha  & \sin m\alpha \\
    -\sin m\alpha  & \cos  m\alpha \\
  \end{array}
\right)|q_0\rangle
\end{equation}
\begin{equation}
=\left(
  \begin{array}{cc}
    \cos (n-2m-1)\alpha  &\sin (n-2m-1)\alpha \\
    \sin (n-2m-1)\alpha  & \cos  (n-2m-1)\alpha \\
  \end{array}
\right)|q_0\rangle
\end{equation}
\begin{equation}
=\cos\left((n-2m-1)\alpha\right)|q_0\rangle+\sin\left((n-2m-1)\alpha\right)|q_1\rangle.
\end{equation}

The probability of observing  $|q_1\rangle$ is
$\sin^2\left(\sqrt{2}(n-2m-1)\pi\right)$ in Step 4.0.
Without a loss of generality, we assume that $n-2m-1>0$. Let $l$ be
the closest integer to $\sqrt{2}(n-2m-1)$. If
$\sqrt{2}(n-2m-1)>l$, then $2(n-2m-1)^2>l^2$. So we get
$2(n-2m-1)^2-1\geq l^2$ and $l\leq \sqrt{2(n-2m-1)^2-1}$.
Therefore
\begin{equation}
\sqrt{2}(n-2m-1)-l\geq \sqrt{2}(n-2m-1)-\sqrt{2(n-2m-1)^2-1}
\end{equation}

\begin{equation}
=\frac{(\sqrt{2}(n-2m-1)-\sqrt{2(n-2m-1)^2-1})(\sqrt{2}(n-2m-1)+\sqrt{2(n-2m-1)^2-1})}{\sqrt{2}(n-2m-1)+\sqrt{2(n-2m-1)^2-1}}
\end{equation}
\begin{equation}
=\frac{1}{\sqrt{2}(n-2m-1)+\sqrt{2(n-2m-1)^2-1}}>\frac{1}{2\sqrt{2}(n-2m-1)}.
\end{equation}

Because $l$ is the closest integer to $\sqrt{2}(n-2m-1)$, we have
$0<\sqrt{2}(n-2m-1)-l<1/2$. Let $f(x)=sin(x\pi)-2x$. We have
 $f''(x)=-\pi^2\sin(x\pi)\leq 0$ when $x\in
[0,1/2]$. That is to say, $f(x)$ is concave in  the interval $[0,1/2]$, and we
have $f(0)=f(1/2)=0$. So, for any $x\in[0,1/2]$,
$f(x)\geq 0$, that is $\sin(x\pi)\geq 2x$. Therefore, we have
\begin{equation}
\sin^2(\sqrt{2}(n-2m-1)\pi)=\sin^2\left((\sqrt{2}(n-2m-1)-l)\pi\right)
\end{equation}
\begin{equation}
\geq \left(2(\sqrt{2}(n-2m-1)-l)\right)^2
=4\left(\sqrt{2}(n-2m-1)-l\right)^2
\end{equation}
\begin{equation}
> 4\left(\frac{1}{2\sqrt{2}(n-2m-1)}\right)^2=\frac{1}{2(n-2m-1)^2}>\frac{1}{2(n-2m-1)^2+1}.
\end{equation}

If $\sqrt{2}(n-2m-1)<l$, then $2(n-2m-1)^2<l^2$. So we get
$2(n-2m-1)^2+1\leq l^2$ and $l\geq \sqrt{2(n-2m-1)^2+1}$. We have therefore
\begin{equation}
\sqrt{2}(n-2m-1)-l\leq \sqrt{2}(n-2m-1)-\sqrt{2(n-2m-1)^2+1}
\end{equation}
\begin{equation}
=\frac{\left(\sqrt{2}(n-2m-1)-\sqrt{2(n-2m-1)^2+1}\right)\left(\sqrt{2}(n-2m-1)+\sqrt{2(n-2m-1)^2+1}\right)}{\sqrt{2}(n-2m-1)+\sqrt{2(n-2m-1)^2+1}}
\end{equation}
\begin{equation}
=\frac{-1}{\sqrt{2}(n-2m-1)+\sqrt{2(n-2m-1)^2+1}}<\frac{-1}{2\sqrt{2(n-2m-1)^2+1}}
\end{equation}
and this implies that
\begin{equation}
l-\sqrt{2}(n-2m-1)>\frac{1}{2\sqrt{2(n-2m-1)^2+1}}.
\end{equation}
Because $l$ is the closest integer to $\sqrt{2}(n-2m-1)$, we have
$0<l-\sqrt{2}(n-2m-1)<1/2$. Therefore,
\begin{equation}
\sin^2\left(\sqrt{2}(n-2m-1)\pi\right)=\sin^2\left((\sqrt{2}(n-2m-1)-l)\pi\right)
\end{equation}
\begin{equation}
=\sin^2\left((l-\sqrt{2}(n-2m-1))\pi\right)\geq
\left(2(l-\sqrt{2}(n-2m-1))\right)^2
\end{equation}
\begin{equation}
=4\left(l-\sqrt{2}(n-2m-1)\right)^2>
4\left(\frac{1}{2\sqrt{2(n-2m-1)^2+1}}\right)^2=\frac{1}{2(n-2m-1)^2+1}.
\end{equation}

As  $0<m<n$, we have $|n-2m-1|<n$, and therefore
\begin{equation}
\frac{1}{2(n-2m-1)^2+1}>\frac{1}{2n^2+1}.
\end{equation}
So the lemma has been proved.
\end{proof}

If $w\not\in L_{middle}$, then for any prover $P^*$ the above verifier rejects the input after  Step 4.1 with the probability
\begin{equation}
P_{r} >\frac{1}{2n^2+1}
\end{equation}
according to Lemma \ref{not-in-L-middle}. Moreover, after the last two steps the verifier accepts $w$ with the probability
\begin{equation}
P_{a}=\frac{1}{2^k(n+1)^2}.
\end{equation}

If $k=1+\lceil\log_2 1/\varepsilon\rceil$, then $\varepsilon\geq
1/2^{k-1}$.

If the whole process is repeated indefinitely, then the probability that the verifier rejects the input $w$ for any prover $P^*$  is

\begin{align}
Pr[(P^*,V_{\varepsilon})\  rejects\  w] =&\sum_{i\geq
0}(1-P_a)^i(1-P_r)^iP_r
=\frac{P_r}{P_a+P_r-P_aP_r}>\frac{P_r}{P_a+P_r}\\
>&\frac{1/(2n^2+1)}{\varepsilon/2(n+1)^2+1/(2n^2+1)}
=\frac{(n+1)^2/(2n^2+1)}{\varepsilon/2+(n+1)^2/(2n^2+1)}
\end{align}

If we now denote $f(x)=\frac{x}{\varepsilon/2+x}=1-\frac{\varepsilon}{(\varepsilon+2x)}$, then $f(x)$ is monotonously increasing in $(0, +\infty)$ and since $(n+1)^2/(2n^2+1)\ge1/2$, we have
\begin{equation}
Pr[(P^*,V_{\varepsilon})\  rejects\
w]>\frac{1/2}{1/2+\varepsilon/2}=\frac{1}{1+\varepsilon}>1-\varepsilon.
\end{equation}

If $|w|=n$, then Steps  1 to 4.1  takes $\textbf{O}(n)$
time, the loops 4.2 takes $\textbf{O}(n^2)$ time,
and Step 5 takes $\textbf{O}(k)$ time.
The expected number of the repetitions of the algorithm
 is $\textbf{O}(2^kn^2)$ in both cases. Hence, the expected running
time of $(P,V_{\varepsilon})$ is
$\textbf{O}\left(\frac{1}{\varepsilon}\left(n^4+n^2\log{\frac{1}{\varepsilon}}\right)\right)$.
\end{proof}

\begin{Rm}
Concerning the expected running time, if the algorithm halts with the probability $h(n):N\rightarrow [0,1]$ in one iteration,
then the expected number of repetitions till the algorithm halts is
\begin{equation}\label{expected}
    \sum_{i=1}^{+\infty} (1-h(n))^{i-1}h(n)\times i=\frac{1}{h(n)}.
\end{equation}
The halting probability in one iteration in Figure \ref{f3} is ${\bf \Omega}\left(\frac{1}{2^kn^2}\right)$, so according to Equality \ref{expected}, the expected numbers of iterations of the algorithm
 is $\textbf{O}(2^kn^2)$.
\end{Rm}

\begin{Th}
Interactive proof systems QAM(ptime-2QCFA) are more powerful than
AM(ptime-2PFA).
\end{Th}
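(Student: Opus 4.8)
The plan is to prove the statement by establishing an inclusion together with a strict separation: first that every language recognized by an AM(ptime-2PFA) is also recognized by some QAM(ptime-2QCFA), and second that this inclusion is proper because there is a concrete language lying in the larger class but not in the smaller one. Combining the two gives that QAM(ptime-2QCFA) is strictly more powerful than AM(ptime-2PFA).

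For the strict separation I would simply assemble results already available. By Theorem \ref{L-middle} the language $L_{middle}$ is recognized by a QAM(2QCFA) with one-sided error $\varepsilon$ in polynomial expected running time, so $L_{middle}\in$ QAM(ptime-2QCFA). On the other hand, Dwork and Stockmeyer \cite{DwS92} showed that $L_{middle}$ cannot be recognized by any AM(2PFA) in polynomial expected running time, so $L_{middle}\notin$ AM(ptime-2PFA). Hence $L_{middle}$ witnesses a language separating the two classes, and no further work is needed for this direction.

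For the inclusion AM(ptime-2PFA) $\subseteq$ QAM(ptime-2QCFA) the idea is that a 2QCFA verifier can faithfully simulate a 2PFA verifier while leaving the Arthur--Merlin communication protocol unchanged. A 2QCFA is a two-way deterministic automaton augmented with a constant-size quantum register, and by the lemma cited from \cite{Zhg12} that register can be used to realize a fair coin flip via the states $|q_0\rangle$ and $|q_1\rangle$. Since every coin-tossing distribution of a 2PFA takes values only in $\{0,1/2,1\}$, each such probabilistic transition can be reproduced exactly by the quantum verifier. Because both systems use public coins, the information the verifier broadcasts (its classical state, last head move, and a trivial measurement outcome) matches, so every prover strategy transfers verbatim between the two models. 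Consequently the acceptance and rejection probabilities are preserved, the soundness condition (rejection probability $\ge 1-\varepsilon$ against every prover) carries over directly, and the polynomial bound on the expected running time is preserved up to a constant factor.

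The main obstacle will be reconciling the two error models rather than the simulation itself. The QAM(2QCFA) definition adopted here demands perfect completeness (accepting probability exactly $1$ on inputs in $L$), whereas AM(2PFA) is defined with two-sided error, so a verbatim simulation of a two-sided-error AM(2PFA) yields only a two-sided-error QAM(2QCFA), which does not literally meet the one-sided criterion. I would address this either by restricting the inclusion statement to one-sided-error AM(2PFA), for which the simulation is immediate, or by noting that the separation via $L_{middle}$ already establishes strictness independently of how the inclusion is normalized; making a general two-sided-to-one-sided completeness conversion precise while keeping the expected running time polynomial is the point that would require the most care.
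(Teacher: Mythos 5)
Your proposal follows essentially the same route as the paper: the inclusion AM(ptime-2PFA) $\subseteq$ QAM(ptime-2QCFA) is obtained by observing that a 2QCFA verifier can simulate the 2PFA verifier's coin flips exactly in polynomial time (the paper cites \cite{ZhgQiu11} for this), and strictness comes from $L_{middle}$, which is recognized by a QAM(ptime-2QCFA) by Theorem \ref{L-middle} but by no AM(ptime-2PFA) by \cite{DwS92}. Your closing caveat~--~that the paper's QAM(2QCFA) acceptance criterion demands perfect completeness while AM(2PFA) is defined with two-sided error, so a verbatim simulation needs the acceptance modes to be reconciled~--~is a legitimate subtlety that the paper's one-line proof passes over in silence, and your fallback (the separation witness alone already carries the strictness) is the right way to keep the conclusion intact.
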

\begin{proof}
The fact that a coin-flipping can be simulated perfectly by 2QCFA in polynomial time, see \cite{ZhgQiu11}, implies that 2PFA can be simulated in polynomial time by 2QCFA and therefore QAM(ptime-2QCFA) are at least as powerful as AM(ptime-2PFA). By \cite{DwS92}, the language $L_{middle}$ cannot be recognized by AM(ptime-2PFA) and this implies that QAM(ptime-2QCFA) are more powerful than AM(ptime-2PFA).
\end{proof}

\subsection{ Recognition of the language $L_{mpal}$}

We show now that the language $L_{mpal}=\{xax^R\,|\, x\in \{a,b\}^*\}$, which cannot be recognized at all by AM(2PFA), as shown later, can be recognized by QAM(2QCFA) in an exponential expected running time.

\begin{Th}\label{L-mpal}

For any $\varepsilon$ there is a QAM(2QCFA) $A_{\varepsilon}$ with a verifier-prover pair $(P,V_{\varepsilon})$ accepting the language $L_{mpal}$ with one sided error $\varepsilon$ in the exponential running time ${\bf O}\left(n\log\frac{1}{\varepsilon}\cdot 2^{n\log{\frac{1}{\varepsilon}}}\right)$, where $n=|w|$ is the length of the input $w$.

\end{Th}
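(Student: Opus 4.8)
The plan is to reduce membership in $L_{mpal}$ to a comparison of two integers, one encoding the left half of the word and one encoding the reverse of the right half, and to carry out that comparison by exactly the rotation-and-measure engine already used for $L_{middle}$ in Theorem \ref{L-middle} and in \cite{Amb02}. The only conceptual change is that the compared quantities now encode the full \emph{content} of each half rather than merely its length, and this is precisely what will force an exponentially small separation and hence exponential time. Concretely, I would map $a\mapsto 1$, $b\mapsto 2$ and assign to $z=z_1\cdots z_j$ the integer $N(z)=\sum_{i=1}^{j} d(z_i)\,3^{\,i-1}$ (base $3$ with nonzero digits, so that $N$ is injective and strings of different lengths get codes in disjoint ranges). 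The verifier would again use the qubit with basis $\{|q_0\rangle,|q_1\rangle\}$, the rotation $U_\alpha$ with $\alpha=\sqrt{2}\pi$, and $|q_0\rangle$ as initial state, and it would use the prover \emph{only} to be told the center position $m+1$. It then (i) checks deterministically that the symbol at the claimed center is $a$, rejecting otherwise; (ii) reads $u=w_1\cdots w_m$ and applies $U_{\alpha}$ exactly $N(u)$ times; (iii) reads $v=w_{m+2}\cdots w_n$ and applies $U_{-\alpha}$ exactly $N(v^R)$ times; (iv) measures in $\{|q_0\rangle,|q_1\rangle\}$, after which it runs the same coin-flip loop as in Figure \ref{f3}. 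Crucially, I would generate the count $N(\cdot)$ by using the tape to the left of the head as a base-$3$ odometer, so that the number of rotations is a function of the input alone and needs no help from the prover. The net rotation after (ii)--(iii) is $\alpha\,(N(u)-N(v^R))$, so the post-rotation state equals $|q_0\rangle$ if and only if $N(u)=N(v^R)$, i.e.\ if and only if $u=v^R$.

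For completeness, I would note that if $w=xax^R\in L_{mpal}$ the honest prover names the true center, whence $u=x$ and $v^R=x$, so $N(u)=N(v^R)$ and the quantum state returns to $|q_0\rangle$ with certainty; thus $|q_1\rangle$ is never observed and $w$ is never falsely rejected. The coin-flip loop then drives the total acceptance probability to $1$ by the same geometric-series argument as in Theorem \ref{L-middle}, so the error is one-sided as required.

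The main work, and the step I expect to be the real obstacle, is soundness against \emph{every} prover. Fix $w\notin L_{mpal}$ and any claimed center $m+1$. Either the center symbol is not $a$, in which case the verifier rejects outright, or $w$ is not the marked palindrome with this center, so $u\ne v^R$; since $N$ is injective and separates lengths, this forces $N(u)\ne N(v^R)$ with $1\le |N(u)-N(v^R)|\le 3^{\,n}$. The probability of observing $|q_1\rangle$ in one iteration is $\sin^2\!\big(\sqrt{2}\,\pi\,(N(u)-N(v^R))\big)$, and I would lower-bound it by the very closest-integer estimate proved in Lemma \ref{not-in-L-middle}, now applied with the integer $N(u)-N(v^R)$ in the role of $n-2m-1$, obtaining $P_r\ge \frac{1}{2(N(u)-N(v^R))^2+1}\ge \frac{1}{2\cdot 3^{\,2n}+1}$. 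Thus the per-iteration rejection probability is positive but only exponentially small, while the per-iteration acceptance probability is the coin-flip quantity $P_a=1/(2^k(n+1)^2)$; choosing $k$ as a function of $\varepsilon$ and iterating indefinitely, the same computation $Pr[(P^*,V_\varepsilon)\ \text{rejects}\ w]=P_r/(P_a+P_r-P_aP_r)>1-\varepsilon$ as in Theorem \ref{L-middle} closes the argument. The two delicate points here are exactly (a) that the number of applied rotations is input-determined, so that no cheating prover can manufacture a net-zero rotation for a non-palindrome, and (b) the irrationality estimate, which is where the exponentially small gap originates.

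Finally, for the running time I would observe that one iteration costs $\mathbf{O}(3^{\,n})$ for generating the rotations together with the $\mathbf{O}(n^2)$ and $\mathbf{O}(k)$ contributions of the coin-flip part, while the expected number of iterations is $\mathbf{O}(1/P_r)$, which is exponential. Tracking the dependence on $\varepsilon$ through the choice of the encoding base and of $k=\mathbf{O}(\log\frac{1}{\varepsilon})$ and multiplying by the per-iteration cost then yields the stated expected running time ${\bf O}\!\left(n\log\frac{1}{\varepsilon}\cdot 2^{\,n\log\frac{1}{\varepsilon}}\right)$. I expect the only subtlety in this last step to be the bookkeeping that pushes the $\log\frac{1}{\varepsilon}$ factor into the exponent, which comes from how the required resolution of the integer comparison scales with the target error.
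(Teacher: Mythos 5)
Your reduction has a fatal implementation gap at exactly the point you flagged as delicate: the verifier cannot apply $U_\alpha$ exactly $N(u)$ times with $N(u)=\Theta(3^n)$. The verifier in a QAM(2QCFA) is a two-way finite automaton with a \emph{read-only} input tape and a constant number of classical states, so there is no ``base-3 odometer on the tape'' to write into. Worse, any stretch of the computation that involves no measurement is deterministic in the classical part, hence either halts within $|S|\cdot(n+2)$ steps (before a classical configuration repeats) or loops forever without ever reaching your measurement; an input-determined count of $3^n$ unitary applications is therefore impossible for this model. The paper's proof avoids the problem entirely by never counting: it encodes the \emph{content} of each half directly into amplitudes using two $3$-dimensional unitaries $U_a, U_b$ (entries $\pm 3/5, 4/5$), applying exactly one operator per scanned symbol, so a single iteration costs only ${\bf O}(n)$; the exponentially small separation $P_r \ge 5^{1-n}$ for $u\ne v^R$ then comes from an algebraic lemma about products of these matrices (Lemma \ref{XY}, from \cite{Zhg12}), not from an exponential rotation count. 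This is the missing idea in your proposal, and a single qubit rotated by a fixed irrational angle cannot substitute for it within the model.

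There is a second, independent error in your soundness bookkeeping. You keep the coin-flip branch of Figure \ref{f3} with $P_a = 1/(2^k(n+1)^2)$ and choose $k={\bf O}(\log\frac{1}{\varepsilon})$, but your own bound gives $P_r \ge \frac{1}{2\cdot 3^{2n}+1}$, exponentially small in $n$. Then
\begin{equation}
\frac{P_r}{P_a+P_r} \le \frac{P_r}{P_a} = {\bf O}\left(2^k n^2\, 3^{-2n}\right) \longrightarrow 0,
\end{equation}
so a cheating prover is accepted with probability tending to $1$: the acceptance branch must be exponentially unlikely in order to be dominated by the exponentially small rejection probability. This is why the paper flips $kn$ coins per iteration (accepting with probability $2^{-kn}$, with $k\ge\max\{\log 5,\log\frac{1}{\varepsilon}\}$ so that $2^{-kn}\le \varepsilon\cdot 5^{1-n}$), and it is also the true source of the stated running time ${\bf O}\left(n\log\frac{1}{\varepsilon}\cdot 2^{n\log\frac{1}{\varepsilon}}\right)$: the per-iteration cost is polynomial and the exponential factor is the expected number ${\bf O}(2^{kn})$ of iterations, not (as in your accounting) an exponential per-iteration cost of generating rotations.
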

\begin{proof}

 In the QAM described in the following, the prover is used only to determine, for the verifier,  the middle symbol of the input.

\begin{figure}[htbp]
\centering\epsfig{figure=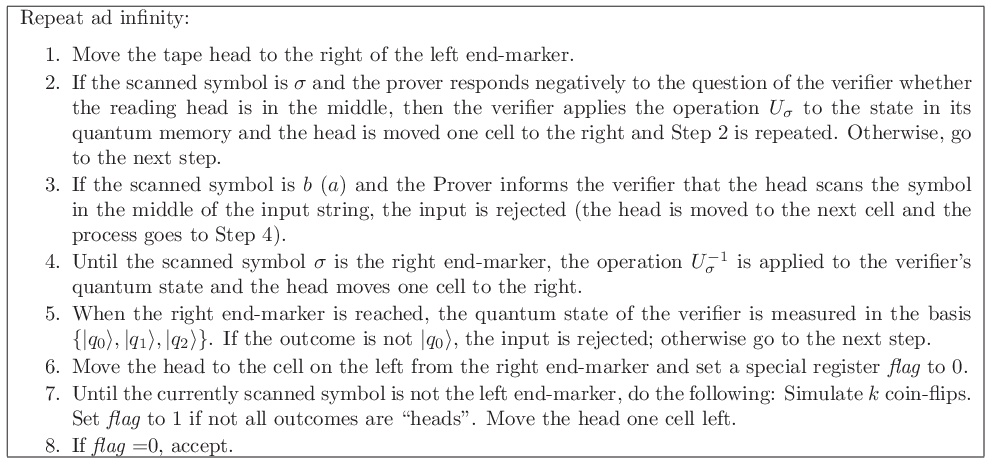,width=140mm}\\
 \centering\caption{Description of the behavior of the verifier $V_{\varepsilon}$ of the QAM $(P,V_{\varepsilon})$ when the language $L_{mpal}$ is recognized. The choice of $k$ depends on $\varepsilon$, as discussed in the text.}\label{f4}
\end{figure}

 At the beginning of the actions of the QAM, both the verifier and the prover share an input string $w$.
 The verifier $V_{\varepsilon}$ starts with the head at the left end-marker, in the initial
classical state $s_0$ and in the initial quantum  state $|q_0\rangle$ and keeps moving its head, cell by cell, till the right end-marker is reached, following the rules described informally in Figure \ref{f4}, where
\begin{equation}\label{matrix}
U_a=\frac{1}{5}\left(
  \begin{array}{ccc}
      4 & 3 & 0  \\
     -3 & 4 & 0  \\
      0 & 0 & 5
  \end{array}
\right), U_b=\frac{1}{5}\left(
  \begin{array}{ccc}
      4 & 0 & 3 \\
      0 & 5 & 0  \\
      -3& 0 & 4 \\
  \end{array}
\right)
\end{equation}
are unitary matrices that will be applied on verifier's quantum states in the Hilbert space spanned by three orthonormal basis states $|q_0\rangle, |q_1\rangle$ and $|q_2\rangle$.

An application of the unitary matrices on these basis states can also be described as follows

\begin{center}
\begin{tabular}{l|l}
  $U_{a}|q_0\rangle=\frac{4}{5}|q_0\rangle-\frac{3}{5}|q_1\rangle$ & $U_{b}|q_0\rangle=\frac{4}{5}|q_0\rangle-\frac{3}{5}|q_2\rangle$  \\
  $U_{a}|q_1\rangle=\frac{3}{5}|q_0\rangle+\frac{4}{5}|q_1\rangle$ & $U_{b}|q_1\rangle=|q_1\rangle$                                    \\
  $U_{a}|q_2\rangle=|q_2\rangle$                                   & $U_{b}|q_2\rangle=\frac{3}{5}|q_0\rangle+\frac{4}{5}|q_2\rangle$  \\
\end{tabular}
\end{center}
Let us now describe more formally the actions of the verifier $V_{\varepsilon}$
for an input $w$ with $|w|=n$.

{\bf Completeness condition}: After the start, in each round, the verifier $V_{\varepsilon}$ keeps
asking the prover whether the symbol currently scanned is in the middle of the input, until he gets a positive reply. After that the verifier has no need to get more information from the prover.

\begin{Lm}
If the input $w=xax^R$, then the quantum state of the verifier $V_{\varepsilon}$ after Step 4
is the quantum state $|q_0\rangle$.

\end{Lm}
\begin{proof}
Let $x=x_1x_2\ldots x_l$ for all $x_i \in \{a,b\}$. After
Step 4 the quantum state will be
\begin{equation}
 |\psi\rangle=U_{x_1}^{-1}U_{x_2}^{-1}\cdots U_{x_l}^{-1}U_{x_l}\cdots U_{x_2}U_{x_1}|q_0\rangle=|q_0\rangle.
\end{equation}
\end{proof}

\begin{Lm}
An execution of Steps 6 to 8 leads to an acceptance with probability $2^{-kn}$.
\end{Lm}
\begin{proof}
The probability that at the end of Step 8 flag = 0 is $2^{-kn}$. This is therefore the probability that the input is accepted in Step 8. If this is not the case,
a new iteration starts in the state $|q_0\rangle$.
\end{proof}

If the input $w\in L_{mpal}$, then the verifier never rejects the input in Step 5. After Steps 6 to 8 the input is accepted with the probability $P_a=2^{-kn}$ and is rejected with the probability $P_r=0$. If the whole computation process is repeated indefinitely, the accepting probability is
\begin{align}
Pr[(P^*,V_{\varepsilon})\  \text{ rejects}\  w] =\sum_{i\geq
0}(1-P_a)^i(1-P_r)^iP_a
=\frac{P_a}{P_a+P_r-P_aP_r}
=\frac{P_a}{P_a}=1.
\end{align}

{\bf Soundness condition}:  For a three dimensional vector $u$, let $u[i]$ denote its $i$-th component.

\begin{Lm}\label{XY}\cite{Zhg12}
Let
$$u=Y_1^{-1}\ldots Y_l^{-1}X_m\ldots X_1(1,0,0)^T$$
where $X_j, Y_j\in \{A, B\}$. If $X_j=Y_j$ for all $1\le j\le m$  and $l=m$, then $u[2]^2+u[3]^2=0$; otherwise $u[2]^2+u[3]^2>5^{-(m+l)}$.
\end{Lm}

\begin{Lm}\label{not-in-L-pal}
If the input string $w\not\in L_{mpal}$, then for any prover $P^*$ the verifier $V_{\varepsilon}$ rejects $w$  after Step 5 with the probability at least $5^{1-n}$.
\end{Lm}

\begin{proof}

Suppose that  Prover $P^*$ informs the verifier after scanning the $(m+1)$-st symbol that it is the symbol in the middle of the input string.

If the input string $w$ is not of the form $xay$ with $|x|=m$, then the verifier
rejects the input in Step 3. Assume now that $w=xay$, $|x|=m, |y|=l$ and $y\neq x^R$.

Let $x=x_1x_2\cdots x_m$ and $y=y_1y_2\cdots y_l$. Starting with the
state $|q_0\rangle$, the verifier $V_{\varepsilon}$ changes its quantum state
after the loop 4 to:
\begin{equation}
 |\psi\rangle=U_{y_l}^{-1}\cdots U_{y_2}^{-1} U_{y_1}^{-1}U_{x_m}\cdots U_{x_2}U_{x_1}|q_0\rangle.
\end{equation}
Let
$|\psi\rangle=\beta_0|q_0\rangle+\beta_1|q_1\rangle+\beta_2|q_2\rangle$.
According to Lemma \ref{XY}, $\beta_1^2+\beta_2^2>5^{-(m+l)}$. Therefore, if in Step 5, the quantum state $|\psi\rangle$ is measured, then the input is rejected with the probability
$P_r=\beta_1^2+\beta_2^2>5^{-(m+l)}=5^{1-n}$.
\end{proof}

If $w\not\in L_{mpal}$, then, according to Lemma \ref{not-in-L-pal}, for any Prover $P^*$ the verifier $V_{\varepsilon}$ rejects the input after  Step 5 with the probability
\begin{equation}
P_{r} >5^{1-n}
\end{equation}
and the input is accepted after Steps 6 to 8 with the probability
\begin{equation}
P_{a}=2^{-kn}.
\end{equation}
If the whole computation process is repeated indefinitely, the probability
that the verifier rejects the input is (taking into the consideration that $k\ge \max\{\log 5, \log\frac{1}{\varepsilon}$\})

\begin{align}
Pr[(P^*,V_{\varepsilon})\  \text{ rejects}\  w] =&\sum_{i\geq
0}(1-P_a)^i(1-P_r)^iP_r
=\frac{P_r}{P_a+P_r-P_aP_r}\\
>&\frac{P_r}{P_a+P_r}
>\frac{5^{1-n}}{2^{-kn}+5^{1-n}}
>\frac{1}{1+\varepsilon}>1-\varepsilon.
\end{align}

{\bf Time analysis:} Steps 1 to 5 take {\bf O}$(n)$ time and Steps 6 to 8
{\bf O}$(kn)$ time. The halting probability is ${\bf \Omega}\left(\frac{1}{2^{kn}}\right)$  in both cases, so the expected number of repetitions of the above process is {\bf O}($2^{kn}$) in both cases.
Hence the expected running time of the QAM $(P,V_{\varepsilon})$ is
{\bf O}$(kn\cdot 2^{kn})$ and therefore {\bf O}$\left(\log\frac{1}{\varepsilon}\cdot n\cdot 2^{n\log{\frac{1}{\varepsilon}}}\right)$.

\end{proof}

In order to prove that the language $L_{mpal}$ cannot be recognized by AM(2PFA) the following Lemma will be used.

\begin{Lm}\cite{DwS92}\label{not-in-AM(2PFA)}
Let a language $L\subseteq \Sigma^*$. Suppose there is an infinite
set $I$ of positive integers such that, for each $m\in I$, there are an integer
$N(m)$ and multisets $W_m=\{w_1,w_2,\cdots, w_{N(m)}\}$,
$U_m=\{u_1,u_2,\cdots,u_{N(m)}\}$ and
$V_m=\{v_1,v_2,\cdots,v_{N(m)}\}$ of words such that
\begin{enumerate}
    \item [(1)] $|w|\leq m$ for all $w\in W_m$,
    \item [(2)] for every integer $k$ there is an $m_k$ such that
    $N(m)\geq m^k$ for all $m\in I$ such that $m\geq m_k$, and
    \item [(3)] for all $1\leq i, j\leq N(m)$, $u_jw_iv_j\in L$
    iff $i=j$,
\end{enumerate}
then $L$ cannot be recognized by AM(2PFA).
\end{Lm}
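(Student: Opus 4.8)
Since the statement is quoted from Dwork and Stockmeyer, the aim of this proposal is to reconstruct how such a fooling-set criterion is established. The plan is to argue by contradiction. Assume some AM(2PFA) $(P,V)$ recognizes $L$ with error $\varepsilon<1/2$, where the verifier $V$ has a fixed number $c=|S|$ of classical states. First I would pass to the optimal acceptance probability $p(x)=\sup_{P^*}\Pr[(P^*,V)\text{ accepts }x]$: completeness gives $p(x)\ge 1-\varepsilon$ for $x\in L$, while soundness (which already quantifies over all provers) gives $p(x)\le\varepsilon$ for $x\notin L$. Evaluating $p$ on the words $u_jw_iv_j$ and using condition (3) yields an $N(m)\times N(m)$ matrix $A=(a_{ij})$ with $a_{ij}=p(u_jw_iv_j)$, whose diagonal entries are $\ge 1-\varepsilon$ and whose off-diagonal entries are $\le\varepsilon$. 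The constant gap $1-2\varepsilon>0$ is the resource every later estimate must exploit, and the goal is to show that a finite verifier cannot sustain such a matrix once $N(m)$ outgrows every polynomial in $m$.

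The conceptual linchpin is to tame the all-powerful, complete-information prover. Because an AM verifier discloses its entire configuration and $V$ is a finite-state probabilistic machine, fixing a prover turns $(P^*,V)$ on a fixed input $x$ into a finite Markov chain on the configuration set $S\times\{0,\dots,|x|+1\}$, and maximizing absorption in the accepting configurations is a finite reachability Markov decision process whose actions are the prover's symbols in $\Gamma$. Since reachability objectives admit optimal memoryless deterministic policies, $p(x)$ is attained by a memoryless prover and the dynamics reduce to a genuine Markov chain with transition probabilities in $\{0,\tfrac12,1\}$; this is exactly what makes the unbounded, history-dependent prover analyzable. I would then study this chain by the crossing-sequence / transfer-matrix method along the path of head positions: the boundaries $B_1$ (between $u_j$ and $w_i$) and $B_2$ (between $w_i$ and $v_j$) split the computation, and the middle block $w_i$ is summarized by a stochastic transfer operator on the $O(c)$-dimensional boundary data $S\times\{B_1,B_2\}$ together with the probability mass absorbed inside it. Consequently $p(u_jw_iv_j)$ depends on $w_i$ only through an object living in a space of constant dimension $d=O(c^2)$ --- more precisely, since the prover may choose its middle-block policy adaptively, through the convex region of transfer operators achievable over all such policies, against which a fixed bilinear form determined by $(u_j,v_j)$ is maximized.

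The final step is the quantitative count. I would show that the number of middle words of length at most $m$ whose achievable-transfer regions are pairwise distinguishable with the gap $1-2\varepsilon$ --- i.e.\ can give well-separated optimal values against some context --- is at most a polynomial in $m$ whose degree depends only on $c$ and $\varepsilon$. Here $m$ enters through the size $O(cm)$ of the middle-block chain, whose crossing and absorption probabilities are rationals with controlled denominators, while the isolation afforded by the constant gap caps the precision the finite verifier can actually exploit, so that only polynomially many length-$\le m$ middles can realize pairwise-separated behaviors. The gap structure of $A$ forces $N(m)$ pairwise-distinguishable middles, hence $N(m)\le\text{poly}(m)$, contradicting condition (2), which asserts $N(m)\ge m^k$ for every $k$ and all large $m\in I$.

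I expect this last estimate to be the main obstacle. The difficulty is precisely the prover's complete information and unlimited power: one cannot treat the middle block as a fixed finite-state transducer, but must show that its adaptive freedom over exponentially many middle-block policies nonetheless collapses, under the constant-gap isolation, to only polynomially many distinguishable behaviors of a length-$\le m$ word. Controlling the interaction between the maximum over policies (which renders the value function piecewise linear with potentially many facets) and the quantization forced by isolation is the delicate quantitative core; by comparison, the memoryless-prover reduction and the transfer-matrix bookkeeping are routine.
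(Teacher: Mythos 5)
The paper does not prove this lemma at all: it is imported verbatim from Dwork and Stockmeyer \cite{DwS92}, so the only meaningful comparison is against the original argument there. Your opening moves do track that argument: passing to the optimal value $p(x)=\sup_{P^*}\Pr[(P^*,V)\text{ accepts }x]$, extracting from condition (3) a matrix with diagonal $\ge 1-\varepsilon$ and off-diagonal $\le\varepsilon$, exploiting the public-coin property to replace the history-dependent prover by a memoryless deterministic policy in a finite reachability MDP over configurations $S\times\{0,\dots,|x|+1\}$, and summarizing the middle block by crossing/absorption data at the two boundaries, with the correct caveat that the optimal middle-block policy may depend on the context $(u_j,v_j)$. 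All of this is sound and is essentially the skeleton of the proof in \cite{DwS92}.

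The genuine gap is the one you yourself flag: the polynomial counting bound is asserted, not proved, and the two justifications you gesture at do not combine into $N(m)\le\mathrm{poly}(m)$. A packing argument against a constant gap $1-2\varepsilon$ in a constant-dimensional space of transfer operators would give a \emph{constant} bound on the number of distinguishable middles --- but that argument is unavailable, because the optimal value is not uniformly Lipschitz in the crossing probabilities: nonzero crossing probabilities can be as small as $2^{-\Theta(cm)}$, and a context can amplify such exponentially small differences into constant gaps. Conversely, your ``rationals with controlled denominators'' observation gives denominators up to $2^{\Theta(cm)}$, hence \emph{exponentially} many possible quantized behaviors, which is far too weak to contradict $N(m)\ge m^k$. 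The published proof threads this needle with machinery you never supply: a multiplicative perturbation lemma for absorption probabilities of Markov chains in the Greenberg--Weiss tradition, under which middle blocks are compared on a \emph{logarithmic} scale with precision chosen as a function of $m$, $c$ and $\varepsilon$; this is what collapses the count to $m^{O(1)}$ classes with degree depending only on $c$ and $\varepsilon$. A second unresolved point sits inside your own formulation: if the signature of $w$ is the achievable \emph{set} of transfer operators over all middle-block policies, then naive quantization of sets yields $2^{\mathrm{poly}(m)}$ classes, again too many; one must reduce the max-over-policies comparison to per-policy comparisons before counting. So the proposal is an accurate road map whose delicate quantitative core --- the only part that actually carries the theorem --- is missing.
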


We are now ready to prove the following theorem.
\begin{Th}\label{AM(2PFA)}
$L_{mpal}$ cannot be recognized by AM(2PFA).
\end{Th}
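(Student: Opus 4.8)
The plan is to derive the claim from the Dwork--Stockmeyer criterion, Lemma \ref{not-in-AM(2PFA)}, by exhibiting the required families of multisets $W_m, U_m, V_m$. First I would record the structural description of the target language: a string lies in $L_{mpal}=\{xax^{R}\mid x\in\{a,b\}^{*}\}$ exactly when it has odd length, its central symbol is $a$, and it is a palindrome; moreover, for a fixed string of length $2t+1$ the decomposition $xax^{R}$ with $|x|=t$ is \emph{forced} by the length, so membership can be tested by a single palindrome check against that forced split. This rigidity is what makes a fooling-set construction clean.

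Next I would build the fooling family. Fix a parameter $t$ and set $m=t$. Enumerate $\{a,b\}^{t}=\{z_{1},\dots,z_{2^{t}}\}$ and put $N(m)=2^{t}$. For $1\le i\le N(m)$ let $w_{i}=z_{i}$, and for $1\le j\le N(m)$ let $u_{j}=\lambda$ (the empty word) and $v_{j}=a\,z_{j}^{R}$. Then $u_{j}w_{i}v_{j}=z_{i}\,a\,z_{j}^{R}$, a word of length $2t+1$ whose central symbol is $a$.

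It then remains to check the three hypotheses of the lemma. Condition (1) is immediate since $|w_{i}|=t=m$. Condition (2) holds with $I$ the set of all positive integers, because $N(m)=2^{m}$ dominates every polynomial $m^{k}$ once $m$ is large enough. For condition (3), I would argue that, because the length $2t+1$ forces the split, $z_{i}\,a\,z_{j}^{R}\in L_{mpal}$ iff this string is a palindrome, i.e. iff its reversal $z_{j}\,a\,z_{i}^{R}$ equals $z_{i}\,a\,z_{j}^{R}$, which happens precisely when $z_{i}=z_{j}$, that is $i=j$. Feeding $W_m,U_m,V_m$ into Lemma \ref{not-in-AM(2PFA)} then yields that $L_{mpal}$ cannot be recognized by any AM(2PFA).

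The only genuinely delicate point---and the step I would state most carefully---is condition (3): one must be certain that no \emph{alternative} decomposition of $z_{i}\,a\,z_{j}^{R}$ can place it in $L_{mpal}$ when $i\neq j$. This is exactly where the odd length is used, since it pins the centre and hence the unique candidate split, ruling out spurious membership. Everything else (the length bound and the super-polynomial count) is routine, so the whole difficulty of the theorem is concentrated in choosing a family in which the length-bounded words $w_{i}$ alone carry the index while the palindrome is completed by the unbounded $v_{j}$.
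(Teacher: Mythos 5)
Your proposal is correct and is essentially identical to the paper's own proof: the same fooling family $W_m=\{a,b\}^m$ with $N(m)=2^m$, $u_j=\lambda$, and $v_j=a\,z_j^{R}$, fed into Lemma \ref{not-in-AM(2PFA)}. The only difference is that you explicitly verify condition (3) via the forced central split of an odd-length word, a detail the paper asserts without proof.
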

\begin{proof}
Let $I$ be the set of all positive integers. For each $m\in I$, let $N(m)=2^m$, and let $w_1,...,w_{N(m)}$ be an ordering of all words in $\{a,b\}^m$. Let $W_m=\{w_1,w_2,\cdots, w_{N(m)}\}$. By means of  $W_m$, $u_i=\lambda$ (the empty word) and  $v_i=aw_i^R$ for all $i$, then for all $1\le i,j\le N(m)$, $u_jw_iv_j \in L_{mpal}$ if and only if $i=j$. According to Lemma \ref{not-in-AM(2PFA)}, the language $L_{mpal}$ cannot be  recognized by AM(2PFA) and theorem has been proved.
\end{proof}

From the last theorem and from Theorem \ref{L-mpal} it follows:
\begin{Co}
QAM(2QCFA) are more powerful than AM(2PFA).
\end{Co}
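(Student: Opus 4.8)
The plan is to combine a containment statement with the separating language $L_{mpal}$, exactly mirroring the argument already used for the polynomial-time classes in the theorem following Theorem~\ref{L-middle}. First I would show that every language recognized by an AM(2PFA) is also recognized by a QAM(2QCFA), i.e.\ that QAM(2QCFA) is at least as powerful as AM(2PFA). The key ingredient, established in \cite{ZhgQiu11}, is that a 2QCFA can simulate a fair coin toss exactly using its quantum register (prepare $|q_0\rangle$, rotate into an equal superposition of $|q_0\rangle$ and $|q_1\rangle$, and measure in the basis $\{|q_0\rangle,|q_1\rangle\}$). Consequently the probabilistic transition function $\delta$ of any 2PFA verifier can be reproduced, step for step, by a 2QCFA verifier whose classical states carry the states of the simulated 2PFA. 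Since the Arthur--Merlin communication consists of the verifier's current configuration, and the simulating 2QCFA's configuration encodes precisely the same classical data (current state and last head move), the all-powerful prover retains complete information about the simulated run. Thus any AM(2PFA) protocol $(P,V)$ becomes a QAM(2QCFA) protocol with the same acceptance and rejection probabilities, giving the inclusion $\mathrm{AM(2PFA)}\subseteq\mathrm{QAM(2QCFA)}$.

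Second, I would invoke the two facts already proved for the separating language. By Theorem~\ref{L-mpal}, for every $\varepsilon<1/2$ there is a QAM(2QCFA) accepting $L_{mpal}$ with one-sided error $\varepsilon$ (in exponential expected time), so $L_{mpal}$ is recognized by QAM(2QCFA); and by Theorem~\ref{AM(2PFA)}, $L_{mpal}$ cannot be recognized by any AM(2PFA). Combining these two statements with the containment of the first paragraph shows that the inclusion $\mathrm{AM(2PFA)}\subseteq\mathrm{QAM(2QCFA)}$ is proper, which is exactly the assertion that QAM(2QCFA) are more powerful than AM(2PFA).

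The existence of the separating language is the routine direction, since both Theorem~\ref{L-mpal} and Theorem~\ref{AM(2PFA)} are already available; the step demanding care is the containment, and specifically reconciling the two error conventions, as AM(2PFA) is defined with two-sided bounded error whereas QAM(2QCFA) requires perfect completeness. I expect this to be the main obstacle. The way around it is that the coin-flip simulation reproduces the acceptance probabilities \emph{verbatim}, so the completeness and soundness guarantees of the simulated 2PFA are inherited without change and the simulation does not disturb the bounded-error gap; no genuinely new quantum phenomenon is needed at this stage, all the quantum work being concentrated in Theorem~\ref{L-mpal}.
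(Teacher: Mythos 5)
Your proposal is correct and takes essentially the same route as the paper: the paper derives this corollary immediately from Theorem~\ref{L-mpal} (recognizability of $L_{mpal}$ by QAM(2QCFA)) together with Theorem~\ref{AM(2PFA)} (non-recognizability of $L_{mpal}$ by any AM(2PFA)), with the containment $\mathrm{AM(2PFA)}\subseteq\mathrm{QAM(2QCFA)}$ justified earlier, exactly as you do, by the coin-flip simulation of \cite{ZhgQiu11}. The only difference is that you make the containment and the one-sided versus two-sided error convention explicit, whereas the paper leaves both implicit in its earlier theorem on the polynomial-time classes.
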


\subsection{Recognition of the language $L_{knapsack}$}

 In this subsection, we consider a language, over the alphabet $\{0,1,\#\}$,
$L_{knapsack}=\{b\#a_1\#a_2\#\cdots\#a_n\,|\ \linebreak[0] \ \text{such that}\ a_1,\cdots,a_n,b\in 1\{0,1\}^*,\ \text { and there exists a set}\ I\subseteq \{1,\cdots,N\} \ \text{such that}\ v(b)=\sum_{i\in I} v(a_i)\}$, where $v(x)$ is the number such that $x$ is its binary representation.
 $L_{knapsack}$ is actually the 0~-~1 knapsack problem, which is NP-complete. Yakaryilmaz studied QAM with the verifier augmented with a fixed-size quantum register in the Arthur-Merlin
proof system \cite{Yak12,Yak12a} and proved that $L_{knapsack}$ can be recognized by QAM whose verifier is a 2QCFA which uses superoperators. Using an idea from \cite{Yak12a,Yak12} (coding of  binary strings into the amplitudes of quantum states), we  prove that the language $L_{knapsack}$ can also be recognized  by QAM(2QCFA). Our model of 2QCFA is weaker than that of  \cite{Yak12,Yak12a} and different tools, comparing those from \cite{Yak12,Yak12a},  are needed to prove our result.

\begin{Th}\label{L-knapsack}
For any $\epsilon<1/2$, there exists a verifier-prover pair
$(P,V_{\epsilon})$ of a QAM(2QCFA) to recognize the language $L_{knapsack}$ with one-sided
error $\epsilon$ in the exponential running time {\bf O}$\left(  \frac{1}{\varepsilon}(n 6^n+\log\frac{1}{\varepsilon})\right)$, where $n$ is the length of the input.
\end{Th}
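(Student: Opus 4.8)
The goal is to recognize $L_{knapsack}=\{b\#a_1\#\cdots\#a_n \mid \exists I\subseteq\{1,\dots,n\},\ v(b)=\sum_{i\in I}v(a_i)\}$ by a QAM(2QCFA). The key resource, borrowed from Yakaryilmaz, is that an integer $v(x)$ given in binary can be encoded into the amplitude of a quantum state by applying, symbol by symbol, a small rotation whose cumulative angle is proportional to $v(x)$. Concretely, reading the binary digits of $x$ from most to least significant and applying a rotation that doubles the current angle and adds the new bit lets the verifier build up a quantum state whose amplitude records $v(x)$ (up to the fixed-dimensional constraints of the quantum memory, so one uses an irrational rotation angle to keep the encoding injective in the relevant sense). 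The natural strategy is: the all-powerful prover announces a candidate subset $I$ by communicating, for each block $a_i$, whether $i\in I$; the verifier then uses its quantum register to accumulate the signed quantity $v(b)-\sum_{i\in I}v(a_i)$ into an amplitude, and finally tests whether that amplitude is zero by a measurement.

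**Steps in order.** First I would design unitary operators $U_0,U_1$ (one per input bit value) acting on the Hilbert space spanned by a few basis states so that, reading a binary string, the register ends in $|q_0\rangle$ rotated by an angle $\alpha\cdot v(x)$ for some fixed irrational $\alpha$ (e.g.\ a multiple of $\sqrt2\pi$ as in Theorem~\ref{L-middle}); the doubling of positional weight is handled by making each step apply the "previous angle times two plus current bit" update, achievable by repeated application or by squaring maps on the rotation. Second, the verifier sweeps left to right: on block $b$ it accumulates $+\alpha v(b)$, and on each block $a_i$ it queries the prover for the membership bit and, if $i\in I$, accumulates $-\alpha v(a_i)$; the public-coin/complete-information discipline (Remark~2) lets the prover track the verifier's configuration, while soundness is guaranteed because \emph{no} prover strategy can make a wrong instance pass. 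Third, after the full sweep the register holds a state rotated by $\alpha(v(b)-\sum_{i\in I}v(a_i))$; the verifier measures. If $w\in L_{knapsack}$, the honest prover picks the correct $I$, the net angle is $0$, and the state is exactly $|q_0\rangle$, giving the required completeness probability $1$. If $w\notin L_{knapsack}$, then for \emph{every} subset the residual is a nonzero integer bounded by $O(6^n)$ in magnitude, so by a closest-integer argument exactly as in Lemma~\ref{not-in-L-middle} the probability of detecting a nonzero amplitude (rejecting) is bounded below by something like $\Omega(6^{-2n})$. Fourth, I would wrap this in the standard amplification loop (a biased coin-flip via Lemma~8 run $k$ times, with $k=\Theta(\log\frac1\varepsilon)$ so that the spurious acceptance probability is pushed below the rejection probability), and compute the geometric-series accepting/rejecting probabilities exactly as in the two earlier theorems to get one-sided error $\varepsilon$.

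**The time bound and the main obstacle.** The running time follows the same template: one sweep is $\mathbf{O}(n)$, the amplification adds a factor depending on $k$, and the expected number of repetitions is the reciprocal of the per-iteration halting probability, which is $\Omega(6^{-n})$-ish after balancing; multiplying out should yield $\mathbf{O}\!\left(\frac{1}{\varepsilon}(n6^n+\log\frac1\varepsilon)\right)$ as claimed. The hardest part will be the soundness lower bound: I must show that the residual angle $\alpha(v(b)-\sum_{i\in I}v(a_i))$ is bounded away from an integer multiple of $\pi$ \emph{uniformly} over all adversarial subsets $I$ and all prover behaviors, and convert this into a quantitative lower bound on $\sin^2$ of that angle. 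This requires (i) pinning down the exact constant in the amplitude encoding so that distinct integer residuals map to distinguishable angles, and (ii) bounding the magnitude of the residual by $O(6^n)$ to control the closest-integer gap — the $6^n$ in the time bound strongly suggests the relevant denominator in the encoding grows like $5^n$ or $6^n$ per block, analogous to the $5^{-(m+l)}$ bound of Lemma~\ref{XY}. Establishing this clean separation, rather than the amplification or the running-time bookkeeping, is where the real work lies.
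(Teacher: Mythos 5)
There is a genuine gap at the heart of your encoding step. You propose to store $v(x)$ as a rotation angle $\alpha\, v(x)$, updated by ``previous angle times two plus current bit'' as each binary digit is read. But a 2QCFA can only apply a fixed finite set of unitaries, one per (state, symbol) pair, and on a rotation subspace each such unitary adds a \emph{constant} angle; the update $\theta\mapsto 2\theta+b\alpha$ is not linear in the quantum state and is not implementable by any fixed unitary (``squaring maps on the rotation'' are likewise non-unitary). Simulating the doubling by repeated application would require roughly $2^{i}$ rotations for digit position $i$, which the finite classical control cannot count; and delegating the count to the prover destroys soundness, since a cheating prover could inject exactly the over-rotation $r\alpha$ needed to cancel a nonzero residual $r$. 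The paper avoids this entirely: it encodes $v$ \emph{linearly into an amplitude ratio}, the state after reading $b$ being $\left(1,\ v(b),\ 0,\dots\right)^{T}/\sqrt{1+v(b)^2}$ in an 8-dimensional register, realized by the unitaries $U_0,U_1,U_0',U_1',U_{\#}$ interleaved with the projective measurement $M_{for}$; the affine update $v\mapsto 2v+b$ is achieved by postselecting on the measurement outcome $f$ (resetting and restarting the iteration otherwise), which succeeds with probability at least $1/6$ per step. That postselection is precisely where the $6^n$ in the running time comes from~--~the expected number of restarts needed to complete one left-to-right sweep~--~not, as you guess, from a soundness denominator analogous to the $5^{-(m+l)}$ of Lemma \ref{XY}.

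This misidentification propagates to your soundness and time analysis. You place ``the real work'' in a uniform $\Omega(6^{-2n})$ lower bound on $\sin^2$ of the residual angle, but in the paper's encoding soundness is essentially trivial: a nonzero integer residual $r$ leaves the state $\left(1,\ r,\ 0,\dots\right)^{T}/\sqrt{1+r^2}$, which upon measurement is rejected with probability $r^2/(1+r^2)\ge 1/2$ (Lemma \ref{not-in-L-knapsack}), a constant independent of $n$. Your parameters also cannot reproduce the claimed bound: with per-iteration rejection probability only $\Omega(6^{-2n})$ you would need per-iteration acceptance probability smaller still, hence on the order of $6^{2n}$ expected repetitions and a total time of order $n\,6^{2n}/\varepsilon$, not ${\bf O}\left(\frac{1}{\varepsilon}\left(n6^n+\log\frac{1}{\varepsilon}\right)\right)$. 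In the paper the soundness gap is constant, the acceptance probability per iteration is $P_a=2^{-k}$ with $k=1+\lceil\log_2 1/\varepsilon\rceil$ (not $2^{-kn}$), and the expected number of repetitions is only ${\bf O}(2^k)={\bf O}(1/\varepsilon)$; multiplying by the $6^n\cdot{\bf O}(n)$ expected sweep cost plus the ${\bf O}(k)$ coin-flipping cost yields the stated bound.
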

\begin{proof}
Let us assume that the input string $w$ be of the form $b\#a_1\#a_2\#\cdots\#a_n$, where $b,a_i\in 1\{0,1\}^*$. This can be easily checked by an FA. Otherwise, the input string $w$ is rejected immediately. The main idea of the proof is as follows: we consider a 2QCFA verifier
$V_{\epsilon}$ that uses a quantum memory
with states generated by the set of orthogonal quantum states $\{|q_i\rangle:i=0,1,\cdots,7\}$, where $|q_0\rangle,|q_1\rangle,|q_2\rangle$ are used to encode the value of the binary strings. The verifier $V_{\epsilon}$ starts to work with the initial quantum state
$|q_0\rangle$. The tape head of the verifier
$V_{\epsilon}$ moves from the left to right. Firstly, the verifier $V_{\epsilon}$ encodes the value of $b$ into the amplitudes of the quantum state $|q_1\rangle$.
With the help of the prover $P$, the verifier $V_{\epsilon}$ will know whether $a_i$ is selected or not\footnote{Consider the equation $v(b)=\sum_{i\in I}v(a_i)=\sum_{i=1}^nc_iv(a_i)$, where $c_i\in\{0,1\}$. $c_i=1$ means $a_i$ is  selected.}. If $a_i$ is selected, the value of $a_i$ is calculated and encoded into the amplitudes of the quantum state $|q_2\rangle$ and then subtracted from the amplitude of the quantum state $|q_1\rangle$. When the right end-marker $\$$ is reached, the verifier $V_{\epsilon}$ measures
the current quantum state. If the resulting quantum state is
$|q_1\rangle$, the input string $w$ is rejected.
 Otherwise, the verifier continues as shown in Figure \ref{f5}, where the unitary operators and projective measurements are as  follows:
 \begin{figure}[htbp]
  \centering\includegraphics[width=130mm]{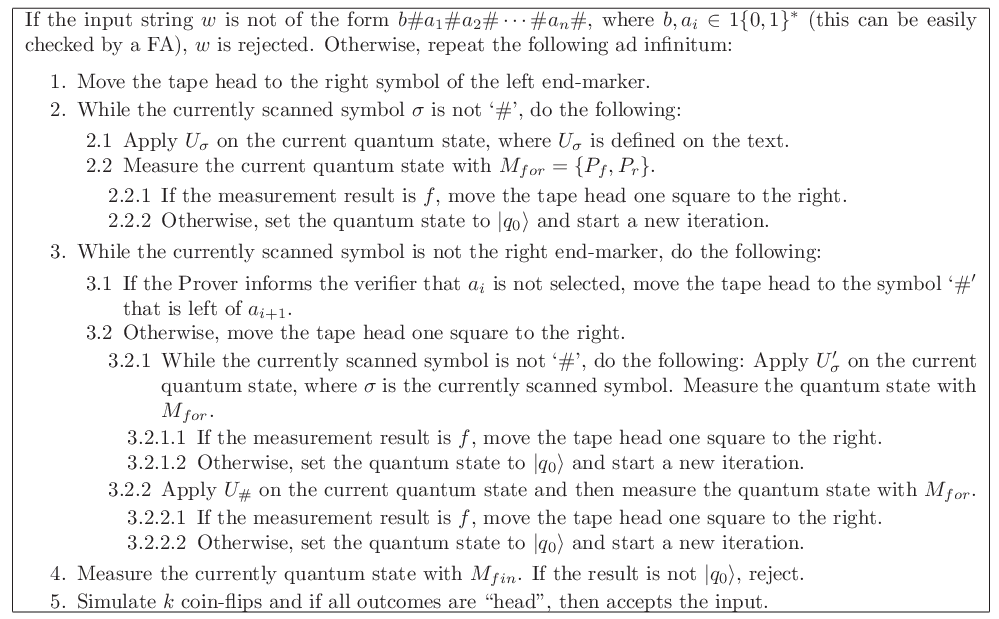}\\
  \centering\caption{Description of the behavior of the verifier $V_{\epsilon}$ of the pair $(P,V_{\epsilon})$ when recognizing the language $L_{knapsack}$. The choice of $k$ depends on $\epsilon$, as discussed in the text.}\label{f5}
\end{figure}
\begin{equation}
U_0=\frac{1}{2}\left(
                  \begin{array}{cccccccc}
                    1 & 0 & 0 & -\frac{\sqrt{6}}{2} & \frac{\sqrt{6}}{2} & 0 & 0 & 0 \\
                    0 & 2 & 0 & 0 & 0 & 0 & 0 & 0 \\
                    0 & 0 & 1 & \frac{\sqrt{6}}{2} & \frac{\sqrt{6}}{2} & 0 & 0 & 0 \\
                    -\frac{\sqrt{6}}{2} & 0 & \frac{\sqrt{6}}{2} & -1 & 0 & 0 & 0 & 0 \\
                    \frac{\sqrt{6}}{2} & 0 & \frac{\sqrt{6}}{2} & 0 & -1 & 0 & 0 & 0 \\
                    0 & 0 & 0 & 0 & 0 & 2 & 0 & 0 \\
                    0 & 0 & 0 & 0 & 0 & 0 & 2 & 0 \\
                    0 & 0 & 0 & 0 & 0 & 0 & 0 & 2 \\
                  \end{array}
                \right),
                U_1=\frac{1}{\sqrt{6}}\left(
                                         \begin{array}{cccccccc}
                                           1 & 0 & 0 & -1 & 2 & 0 & 0 & 0 \\
                                           1 & 2 & 0 & 1 & 0 & 0 & 0 & 0 \\
                                           0 & 0 & 1 & 0 & 0 & 1 & 2 & 0 \\
                                           2 & -1 & 0 & 0 & -1 & 0 & 0 & 0 \\
                                           0 & 1 & 0 & -2 & -1 & 0 & 0 & 0 \\
                                           0 & 0 & 2 & 0 & 0 & 0 & -1 & 1 \\
                                           0 & 0 & -1 & 0 & 0 & 1 & 0 & 2 \\
                                           0 & 0 & 0 & 0 & 0 & -2 & 1 & 1 \\
                                         \end{array}
                                       \right),
\end{equation}
\begin{equation}
U_0'=\frac{1}{2}\left(
                  \begin{array}{cccccccc}
                    1 & 0 & 0 & -\frac{\sqrt{6}}{2} & \frac{\sqrt{6}}{2} & 0 & 0 & 0 \\
                    0 & 1 & 0 & \frac{\sqrt{6}}{2} & \frac{\sqrt{6}}{2} & 0 & 0 & 0 \\
                     0 & 0 & 2 & 0 & 0 & 0 & 0 & 0 \\
                    -\frac{\sqrt{6}}{2} & \frac{\sqrt{6}}{2} & 0 & -1 & 0 & 0 & 0 & 0 \\
                    \frac{\sqrt{6}}{2} & \frac{\sqrt{6}}{2} & 0 & 0 & -1 & 0 & 0 & 0 \\
                    0 & 0 & 0 & 0 & 0 & 2 & 0 & 0 \\
                    0 & 0 & 0 & 0 & 0 & 0 & 2 & 0 \\
                    0 & 0 & 0 & 0 & 0 & 0 & 0 & 2 \\
                  \end{array}
                \right),
                U_1'=\frac{1}{\sqrt{6}}\left(
                                         \begin{array}{cccccccc}
                                           1 & 0 & 0 & -1 & 2 & 0 & 0 & 0 \\
                                           0 & 1 & 0 & 0 & 0 & 1 & 2 & 0 \\
                                           1 & 0 & 2 & 1 & 0 & 0 & 0 & 0 \\
                                           2 & 0 & -1 & 0 & -1 & 0 & 0 & 0 \\
                                           0 & 0 & 1 & -2 & -1 & 0 & 0 & 0 \\
                                           0 & 2 & 0 & 0 & 0 & 0 & -1 & 1 \\
                                           0 & -1 & 0 & 0 & 0 & 1 & 0 & 2 \\
                                           0 & 0 & 0 & 0 & 0 & -2 & 1 & 1 \\
                                         \end{array}
                                       \right),
\end{equation}
\begin{equation}
U_{\#}=\left(
           \begin{array}{cccccccc}
             1 & 0 & 0 & 1 & 0 & 0 & 0 & 0 \\
             0 & 1 & -1 & 0 & 0 & 0 & 0 & 0 \\
             0 & 0 & 0 & 0 & 1 & -1 & 0 & 0 \\
             1 & 0 & 0 & -1 & 0 & 0 & 0 & 0 \\
             0 & 1 & 1 & 0 & 0 & 0 & 0 & 0 \\
             0 & 0 & 0 & 0 & 1 & 1 & 0 & 0 \\
             0 & 0 & 0 & 0 & 0 & 0 & \sqrt{2} & 0\\
             0 & 0 & 0 & 0 & 0 & 0 & 0 &\sqrt{2}\\
           \end{array}
         \right),
\end{equation}
\begin{equation}
         M_{for}=\{P_f,P_r\} \ \text{and}\ M_{fin}=\{P_i:i=0\cdots,7\},\ \text{where}\ P_f=\sum_{i=0}^2|q_i\rangle\langle q_i|,\  P_r=\sum_{i=3}^7|q_i\rangle\langle q_i|\ \text{and}\ P_i=|q_i\rangle\langle q_i|.
\end{equation}

Unitary operators $U_0$ and $U_1$ are used to encode the value of $b$,  $U_0'$ and $U_1'$ are used to encode the value of $a_i$, $U_{\#}$ is used to subtract the amplitude of $|q_2\rangle$ from the amplitude of $|q_1\rangle$, respectively. The projective measurement $M_{for}$ is used to keep the quantum state in $span\{|q_0\rangle, |q_1\rangle, |q_2\rangle\}$.

\begin{Lm}
Setting the quantum state to the initial state in Step 2.2.2 can be done by a projective measurement and unitary operators.
\end{Lm}
\begin{proof}
Let the current quantum state be $|\psi\rangle=\sum_{i=0}^7 \alpha_i|q_i\rangle$, and let the projective measurement $M_{fin}$ be performed on $|\psi\rangle$. If the classical measurement result is $i$, then the resulting quantum state is $|q_i\rangle$. By applying the unitary operator $U_i=|q_0\rangle\langle q_i|$ on the state $|q_i\rangle$, the state $|q_0\rangle$ is obtained.
\end{proof}

{\bf Completeness condition}: If the input string $w\in L_{knapsack}$, then the all-powerful  prover  can make the right choice of $a_i$ and tells that to the verifier during their communications, step by step. Let $a_{j_1},a_{j_2},\cdots,a_{j_s}$ be strings selected by the prover (selection is not unique) where $j_p<j_q$ if $p<q$ and $v(b)=\sum_{i=1}^s v(a_{j_i})$.

\begin{Lm}\label{in-L-knapsack}
If the input $w\in L_{knapsack}$, then the quantum state of the verifier $V_{\varepsilon}$  will be $|q_0\rangle$ after Step 4.
\end{Lm}
\begin{proof}
Let $b=b_1b_2\cdots b_l$. The quantum state of the verifier $V_{\varepsilon}$ after the string $b$ is read is (see Appendix A for a detailed proof)
\begin{equation}\label{ec1}
    |\psi_{b}\rangle=\frac{\prod_{i=1}^l (P_fU_{b_i})|q_0\rangle}{\|\prod_{i=1}^l (P_fU_{b_i})|q_0\rangle\|}=\frac{1}{\sqrt{1+v(b)^2}}\left(
                                                                                                              \begin{array}{c}
                                                                                                                1 \\
                                                                                                                v(b) \\
                                                                                                                0 \\
                                                                                                                \vdots \\
                                                                                                                0\\
                                                                                                              \end{array}
                                                                                                            \right).
\end{equation}
 Let $a_{j_1}=a_{j_{11}}a_{j_{12}}\cdots a_{j_{1l'}}$. The quantum state of the verifier $V_{\varepsilon}$ after the string $a_{j_1}$ is read is (see Appendix A for a detail proof)

\begin{equation}\label{ec2}
    |\psi_{b\#a_{j_1}}\rangle=\frac{\prod_{i=1}^{l'} (P_fU'_{a_{j_{1i}}})|\psi_{b}\rangle}{\|\prod_{i=1}^{l'} (P_fU'_{a_{j_{1i}}})|\psi_{b}\rangle\|}=\frac{1}{\sqrt{1+v(b)^2+v(a_{j_1})^2}}\left(
                                                                                                              \begin{array}{c}
                                                                                                                1 \\
                                                                                                                v(b) \\
                                                                                                                v(a_{j_1}) \\
                                                                                                                \vdots \\
                                                                                                                0\\
                                                                                                              \end{array}
                                                                                                            \right).
\end{equation}
Afterwards, the unitary operator $U_{\#}$ and the projective measurement $M_{for}$ are performed. If the outcome of the measurement is $f$ (that is the quantum state is in $span\{|q_0\rangle, |q_1\rangle, |q_2\rangle\}$), the resulting quantum state is

\begin{equation}
|\psi_{b\#a_{j_1}\#}\rangle=\frac{P_fU_{\#}\prod_{i=1}^{l'} (P_fU'_{a_{j_{1i}}})|\psi_{b}\rangle}{\|P_fU_{\#}\prod_{i=1}^{l'} (P_fU'_{a_{j_{1i}}})|\psi_{b}\rangle\|}=\frac{1}{\sqrt{1+(v(b)-v(a_{j_1}))^2}}\left(
                                                                                                              \begin{array}{c}
                                                                                                                1 \\
                                                                                                                v(b)-v(a_{j_1}) \\
                                                                                                                0 \\
                                                                                                                \vdots \\
                                                                                                                0\\
                                                                                                              \end{array}
                                                                                                            \right).
\end{equation}

Therefore, when the verifier reaches the end of Step 4, the quantum state of the verifier is

\begin{equation}
|\psi_{w}\rangle=\frac{1}{\sqrt{1+\left(v(b)-\sum_{i=1}^s v(a_{j_i})\right)^2}}\left(
                                                                                                              \begin{array}{c}
                                                                                                                1 \\
                                                                                                                v(b)-\sum_{i=1}^s v(a_{j_i}) \\
                                                                                                                0 \\
                                                                                                                \vdots \\
                                                                                                                0\\
                                                                                                              \end{array}
                                                                                                            \right)=\left(
                                                                                                              \begin{array}{c}
                                                                                                                1 \\
                                                                                                                0 \\
                                                                                                                0 \\
                                                                                                                \vdots \\
                                                                                                                0\\
                                                                                                              \end{array}
                                                                                                            \right)=|q_0\rangle                                                                                                .
\end{equation}
\end{proof}

 The resulting quantum state in  Step 4 will be $|q_0\rangle$ with probability 1.  After  Steps 5 the
input is accepted with the probability $ P_a=2^{-k}$ and is rejected with the probability  $P_r=0$.   If  the whole process is repeated for infinitum, the accepting probability is
\begin{align}
Pr[(P^*,V_{\varepsilon})\  \text{ rejects}\  w] =\sum_{i\geq
0}(1-P_a)^i(1-P_r)^iP_a
=\frac{P_a}{P_a+P_r-P_aP_r}
=\frac{P_a}{P_a}=1.
\end{align}

{\bf Soundness condition:} If the input string $w\notin L_{knapsack}$ and $n=|w|$, no matter what choices the prover makes, there is not $I\subseteq \{1,\cdots,N\}$ such that $v(b)=\sum_{i\in I} v(a_i)$.
Suppose $a_{j_1},a_{j_2},\cdots,a_{j_s}$ are the strings selected by the prover where $j_p<j_q$ if $p<q$.

 \begin{Lm}\label{not-in-L-knapsack}
If the input string $w\not\in L_{knapsack}$, then for any prover $P^*$ the verifier $V_{\varepsilon}$ rejects $w$  after Step 4 with the probability at least $\frac{1}{2}$.
\end{Lm}
\begin{proof}
 According to the analysis in Lemma \ref{in-L-knapsack}, the quantum state at the beginning of Step 4 is
 \begin{equation}
 |\psi_{w}\rangle=\frac{1}{\sqrt{1+\left(v(b)-\sum_{i=1}^s v(a_{j_i})\right)^2}}\left(
                                                                                                              \begin{array}{c}
                                                                                                                1 \\
                                                                                                                v(b)-\sum_{i=1}^s v(a_{j_i}) \\
                                                                                                                0 \\
                                                                                                                \vdots \\
                                                                                                                0\\
                                                                                                              \end{array}
                                                                                                            \right).                                                                                              \end{equation}

 Obviously, $|v(b)-\sum_{i=1}^s v(a_{j_i})|\geq 1$.  Therefore, the probability of getting the state $|q_1\rangle$ and rejecting the input $w$ is at leat $\frac{1}{2}$.
\end{proof}

If $w\not\in L_{knapsack}$, then, according to Lemma \ref{not-in-L-knapsack}, for any prover $P^*$ the verifier $V_{\varepsilon}$ rejects the input after Step 4 with the probability
\begin{equation}
P_{r} >\frac{1}{2}
\end{equation}
and the input is accepted after Steps 5 to 7 with  the probability
\begin{equation}
P_{a}=2^{-k}.
\end{equation}
If the whole computation process is repeated indefinitely, the probability
that the verifier rejects the input is (taking into consideration that $k=1+\lceil\log_2 1/\varepsilon\rceil$)
\begin{align}
Pr[(P^*,V_{\varepsilon})\  \text{ rejects}\  w] =&\sum_{i\geq
0}(1-P_a)^i(1-P_r)^iP_r
=\frac{P_r}{P_a+P_r-P_aP_r}\\
>&\frac{P_r}{P_a+P_r}
>\frac{1/2}{2^{-k}+1/2}\geq \frac{1/2}{\varepsilon/2+1/2}
=\frac{1}{1+\varepsilon}>1-\varepsilon.
\end{align}

{\bf Time analysis:} Let $n=|w|$. The probability of getting as the outcome $f$ in the projective measurement $M_{for}$ is at least $\frac{1}{6}$ in the whole process in Figure \ref{f5}.  Hence, the probability for the verifier to reach Step 4 in one iteration is at least $\left(\frac{1}{6}\right)^n$.
 If every outcome of the projective measures $M_{for}$ in all computation steps in an iteration is $f$, then the running time from  Step 1 to Step 4 is ${\bf O}(n)$.
Therefore, the expecting running time from Step 1 to Step 4 is less than
\begin{equation}
\sum_{i=1}^{+\infty}\left(1-\left(\frac{1}{6}\right)^n\right)^{i-1}\left(\frac{1}{6}\right)^n\cdot i\cdot {\bf O}(n)=6^n\cdot {\bf O}(n).
\end{equation}
The running time of Step 5 is ${\bf O}(k)$.
The halting probability is ${\bf \Omega}\left(\frac{1}{2^{k}}\right)$  in both cases, so the expected number of repetitions of the algorithm is {\bf O}($2^{k}$) in both cases.
Hence the expected running time of the QAM $(P,V_{\varepsilon})$ is
{\bf O}$( 2^{k}\cdot (6^n\cdot {\bf O}(n)+{\bf O}(k)))$ and therefore {\bf O}$\left(  \frac{1}{\varepsilon}(n 6^n+\log\frac{1}{\varepsilon})\right)$.
 \end{proof}

 Dwork and Stockmeyer \cite{DwS92} proved that the set of languages recognized by AM(2PFA) is a proper subset of P. However, in the previous theorem we prove that the language $L_{knapsack}$, which is NP complete, can be recognized by QAM(2QCFA). This is another example that indicates QAM(2QCFA) are more powerful than AM(2PFA).

\subsection{Recognition of other languages}
 In this subsection, we sketch the proofs for the following languages, where $\Sigma=\{a,b\}$:
 \begin{equation}\label{L_1}
    L_1=\{w\,|\, \exists s,t,u,v\in \Sigma^*,\ w=sbt=ubv,|s|=|v|\},
 \end{equation}
  \begin{equation}\label{L_2}
    L_2=\{w\,|\, \exists s,t,u,v\in \Sigma^*,\ w=sat=ubv,|s|=|v|\},
 \end{equation}
  that they can be recognized by QAM(2QCFA).
  Freivalds et al. \cite{Fre10} proved that the language $L_1$ is nonstochastic, whereas $L_2$, the set of nonpalindromes, is stochastic.
  We prove that these languages can be recognized by QAM(2QCFA) in polynomial expected running time.

  \begin{Th}\label{L-1}
For any $\varepsilon<1/2$, there exists a verifier-prover pair
$(P,V_{\varepsilon})$ of a QAM(2QCFA) that can recognize $L_{1}$ with one-sided
error $\varepsilon$ in the expected running time
$\textbf{O}\left(\frac{1}{\varepsilon}\left(n^4+n^2\log{\frac{1}{\varepsilon}}\right)\right)$, where $n$ is the length of the input.
\end{Th}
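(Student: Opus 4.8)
The plan is to reduce the recognition of $L_1$ to essentially the same symmetry check used for $L_{middle}$ in Theorem \ref{L-middle}. First I would reformulate membership positionally. Writing $w=w_1w_2\cdots w_n$ and using $|s|=i-1$, $|v|=n-j$ for the two occurrences of $b$ forced by the decompositions $w=sbt$ and $w=ubv$ (with the marked $b$'s at positions $i$ and $j$), the condition $|s|=|v|$ becomes $i+j=n+1$. Hence $w\in L_1$ if and only if there exist positions $i\le j$ with $w_i=w_j=b$ and $i+j=n+1$; equivalently $w_i=b$ and $w_{n+1-i}=b$ for some $i$, the case $i=j$ corresponding to an odd-length word whose central symbol is $b$. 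Thus the all-powerful prover's only genuine role is to point out such a pair of mirror positions, and the verifier must (a) confirm that both marked symbols are $b$ and (b) confirm the symmetry $i+j=n+1$.

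Next I would let the verifier $V_\varepsilon$ reuse the two-dimensional quantum memory spanned by $|q_0\rangle,|q_1\rangle$ and the rotation $U_\alpha$ with $\alpha=\sqrt{2}\pi$ from Theorem \ref{L-middle}. Reading left to right, the verifier applies $U_\alpha$ to each symbol of the prefix preceding the first marked position, applies the identity to each symbol strictly between the two marked positions, and applies $U_{-\alpha}$ to each symbol following the second marked position; at each of the two marked cells it checks that the scanned symbol is $b$ and rejects at once otherwise. When the right end-marker is reached it measures in $\{|q_0\rangle,|q_1\rangle\}$, rejecting on outcome $|q_1\rangle$ and otherwise continuing exactly as in Figure \ref{f3}. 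The number of $+\alpha$ rotations equals $i-1=|s|$ and the number of $-\alpha$ rotations equals $n-j=|v|$, so the net rotation angle is $(i-1-(n-j))\alpha=(i+j-n-1)\alpha$.

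For completeness, when $w\in L_1$ an honest prover supplies $i,j$ with $w_i=w_j=b$ and $i+j=n+1$, so the net angle is $0$ and the state returns with certainty to $|q_0\rangle$; the word is never rejected at the measurement and is accepted with probability $1$ by the same coin-flip amplification as for $L_{middle}$. For soundness, suppose $w\notin L_1$. The on-the-spot checks in (a) guarantee that any prover avoiding immediate rejection marks two genuine $b$-positions $i,j$; since $w\notin L_1$ these must satisfy $d:=i+j-n-1\neq 0$, an integer with $|d|\le n-1$. The probability of observing $|q_1\rangle$ is then $\sin^2(\sqrt{2}\,d\,\pi)$, and the irrationality estimate already established in the proof of Lemma \ref{not-in-L-middle} gives $\sin^2(\sqrt{2}\,d\,\pi)>\frac{1}{2d^2+1}>\frac{1}{2n^2+1}$. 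This is the identical per-round rejection bound $P_r>1/(2n^2+1)$ obtained for $L_{middle}$, and since the acceptance probability $P_a=1/(2^k(n+1)^2)$ and the loop structure of Figure \ref{f3} are unchanged, the same choice $k=1+\lceil\log_2(1/\varepsilon)\rceil$ yields one-sided error $\varepsilon$ and the stated expected running time $\textbf{O}\!\left(\frac{1}{\varepsilon}\big(n^4+n^2\log\frac{1}{\varepsilon}\big)\right)$.

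The work is therefore almost entirely a reduction to Theorem \ref{L-middle}; the one place that needs genuine care, and which I expect to be the main obstacle, is pinning down the prover interaction so that cheating cannot help. Concretely I must argue that marking non-$b$ cells, marking the pair in the wrong order, or collapsing the two marks onto a single central cell are all either detected by the verifier's $b$-checks or reduce to the already-analyzed case, so that every surviving prover strategy on a word outside $L_1$ leaves a nonzero integer net rotation $d$ with $|d|\le n-1$. Once this is settled, the irrationality bound and the amplification analysis transfer essentially verbatim from the proof of Theorem \ref{L-middle}, which also explains why the claimed running time coincides exactly with that of $L_{middle}$.
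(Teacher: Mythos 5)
Your proposal is correct and takes essentially the same route as the paper: the paper likewise reduces $L_1$ to the characterization $L_1=\{w \mid \exists u,v,z,\ w=ubv \ \text{or}\ w=ubzbv,\ |u|=|v|\}$ (equivalent to your condition $w_i=w_j=b$ with $i+j=n+1$), has the prover announce the marked $b$-position(s), and reuses the rotation, soundness bound, and amplification analysis of Theorem \ref{L-middle} verbatim. If anything, your write-up is more explicit than the paper's brief sketch, which leaves the cheating-prover case analysis you flag just as implicit.
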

\begin{proof} If $w\in L_1$, then $w$ can be of one of the following three types
\begin{enumerate}
\item [Type 1.] $w=sbt=ubv$, where $|s|=|u|=|v|$.
\item [Type 2.] $w=sbt=ubv$, where $|s|>|u|$. If $s=ubx$, where $x\in\Sigma^*$, then $w=ubxbt$ and $|u|=|w|-|v|-1=|w|-|s|-1=|t|.$
\item [Type 3.] $w=sbt=ubv$, where $|s|<|u|$. If $u=sby$, where $y\in\Sigma^*$, then $w=sbybv$ and $|s|=|v|.$
\end{enumerate}
Therefore, $L_1=\{w\,|\, \exists u,v,z\in\Sigma^*, \ w=ubv \ \text{or}\ w=ubzbv,\ \text{where}\  |u|=|v|\}$.

By virtue of  the method of the proof of Theorem \ref{L-middle}, the quantum state of the verifier will be rotated by an angle $\alpha$ every time a symbol in $u$ is scanned. The prover just tells the verifier the right time to stop rotation and of which type the input is. After the verifier checks the input is of the form $b$ or $bzb$, the prover tells the verifier the right position to resume rotation of the quantum state by the angle  $-\alpha$. The rest of the proof is similar to the one in Theorem \ref{L-middle}.

\end{proof}

  \begin{Th}\label{L-2}
For any $\varepsilon<1/2$, there exists a verifier-prover pair
$(P,V_{\varepsilon})$ of a QAM(2QCFA) that recognizes the language $L_{2}$ with one-sided
error $\varepsilon$ in the expected running time
$\textbf{O}\left(\frac{1}{\varepsilon}\left(n^4+n^2\log{\frac{1}{\varepsilon}}\right)\right)$, where $n$ is the length of input.
\end{Th}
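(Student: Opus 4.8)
The plan is to reduce membership in $L_2$ to a symmetric position-matching condition and then reuse, almost verbatim, the rotation-and-amplification machinery of Theorem \ref{L-middle}. First I would record the key reformulation: writing $w=w_1\cdots w_n$, the defining conditions $w=sat=ubv$ with $|s|=|v|$ say exactly that the chosen $a$ sits at some position $p=|s|+1$ while the chosen $b$ sits at the symmetric position $q=n+1-p$. Hence $w\in L_2$ if and only if there is an index $i$ with $w_i=a$ and $w_{n+1-i}=b$; since over the alphabet $\{a,b\}$ any mismatch $w_i\neq w_{n+1-i}$ can be re-indexed (replace $i$ by $n+1-i$ when necessary) to place the $a$ on the left, this is precisely the set of nonpalindromes, consistent with the remark preceding the theorem.

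The verifier $V_{\varepsilon}$ is built around the same rotation $U_{\alpha}$, $\alpha=\sqrt{2}\pi$, used in Theorem \ref{L-middle}, but now it performs a round trip. On a left-to-right sweep it applies $U_{\alpha}$ at every cell and, at each step, asks the prover whether the current cell carries the witnessing $a$. When the prover answers yes at some position $i$, the verifier checks that the scanned symbol is indeed $a$ (rejecting otherwise), stops rotating, and walks to the right end-marker. It then sweeps right-to-left applying $U_{-\alpha}=U_{\alpha}^{-1}$ at every cell and asks the prover to mark the symmetric cell; when the prover answers yes at some position $j$, the verifier checks that the scanned symbol is $b$ (rejecting otherwise), stops rotating, and returns to the left end-marker, where it measures in the basis $\{|q_0\rangle,|q_1\rangle\}$. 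The net accumulated angle is $(i-1-(n-j))\alpha$, so if the honest prover chooses $j=n+1-i$ the rotations cancel and the state is $|q_0\rangle$ with certainty; as in Theorem \ref{L-middle} the verifier then enters the coin-flipping amplification loop with acceptance probability $1/(2^k(n+1)^2)$ controlled by $k$, yielding $\Pr[\text{accept}]=1$ for $w\in L_2$ (completeness).

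For soundness, suppose $w$ is a palindrome, so $w\notin L_2$, and fix any prover $P^*$, which effectively selects the marked positions $i$ and $j$. The three verifier tests are $w_i=a$, $w_j=b$, and $j=n+1-i$ (the last enforced by the vanishing of the net rotation). These cannot hold simultaneously: if $j=n+1-i$ then the palindrome property forces $w_j=w_{n+1-i}=w_i$, contradicting $w_i=a$ and $w_j=b$. Thus whenever both symbol checks pass we must have $j\neq n+1-i$, so the net angle $(i+j-n-1)\alpha$ is a nonzero multiple of $\sqrt{2}\pi$ with $0<|i+j-n-1|<n$; the closest-integer estimate of Lemma \ref{not-in-L-middle} then guarantees that the final measurement yields $|q_1\rangle$, hence rejection, with probability at least $1/(2n^2+1)$. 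Any iteration in which a symbol check fails rejects outright. The repeated-process analysis of Theorem \ref{L-middle}, with $P_r\geq 1/(2n^2+1)$, $P_a=1/(2^k(n+1)^2)$, and $k=1+\lceil\log_2 1/\varepsilon\rceil$, then gives rejection probability $>1-\varepsilon$, and the identical timing bookkeeping (each round $\mathbf{O}(n)$ for the sweeps, $\mathbf{O}(n^2)$ for the amplification loop, and $\mathbf{O}(2^k n^2)$ expected rounds) produces the stated running time $\mathbf{O}(\frac{1}{\varepsilon}(n^4+n^2\log\frac{1}{\varepsilon}))$.

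The main obstacle I anticipate is that, unlike $L_{middle}$ where a single designated position suffices, here the witnessing $a$ and $b$ may occur in either order (for instance $ba\in L_2$), so a purely left-to-right construction cannot simultaneously count the prefix before the $a$ and the suffix after the $b$. The round-trip design sidesteps this by accumulating $+\alpha$ for the cells strictly left of the $a$ on the forward pass and $-\alpha$ for the cells strictly right of the $b$ on the backward pass, so the rotation count certifies the symmetry $i+j=n+1$ regardless of whether $i<j$ or $i>j$. The remaining care is to confirm that the prover gains nothing by answering inconsistently: the verifier commits to the first position the prover marks on each sweep, so the whole interaction reduces to a single pair $(i,j)$, and the soundness argument above covers every such choice.
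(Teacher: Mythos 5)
Your proof is correct, and it establishes the theorem with the same underlying machinery as the paper --- the prover pinpoints the two symmetric witness positions, the verifier counts $|u|$ against $|v|$ by rotating $+\alpha$ and $-\alpha$ with $\alpha=\sqrt{2}\pi$, and soundness comes from exactly the closest-integer estimate of Lemma \ref{not-in-L-middle} applied to the nonzero integer $i+j-n-1$ --- but your verifier design differs from the paper's. The paper decomposes $L_2=\{w\mid \exists u,v,z,\ w=uazbv \text{ or } w=ubzav,\ |u|=|v|\}$ and runs a \emph{single left-to-right pass}: the prover announces both marked positions during that pass, the verifier remembers the first marked symbol in its classical state and checks that the two marked symbols \emph{differ}, rotating $+\alpha$ on the prefix $u$ and $-\alpha$ on the suffix $v$. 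You instead make a \emph{round trip}: the $a$ is certified on the forward sweep (accumulating $+\alpha$ on the $i-1$ cells left of it) and the $b$ on the backward sweep (accumulating $-\alpha$ on the $n-j$ cells right of it), with fixed symbol tests $w_i=a$ and $w_j=b$ rather than a ``symbols differ'' test. Your variant buys a cleaner case analysis --- the order of the $a$ and the $b$ in the string is irrelevant by construction, whereas the paper's one-pass scheme handles both orders through the inequality test and must carry one symbol in classical memory; the paper's scheme in turn saves a sweep, though both stay within ${\bf O}(n)$ per iteration and yield the identical bound ${\bf O}\bigl(\frac{1}{\varepsilon}(n^4+n^2\log\frac{1}{\varepsilon})\bigr)$. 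Your argument is in fact more detailed than the paper's, which only sketches the reduction and defers to Theorem \ref{L-middle}. One small detail you should state explicitly: if a sweep ends (end-marker reached) without the prover ever marking a position, the verifier should reject that iteration outright; this is the analogue of the implicit constraint $0<m<n$ in the $L_{middle}$ proof and costs nothing, since it only helps soundness and an honest prover always marks.
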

\begin{proof}

Suppose $w\in L_2$. The input string $w$ can be of one of the following types
\begin{enumerate}
  \item [Type 1.] $w=sat=ubv$ where $|s|>|u|$. Let $s=ubx$ where $x\in\Sigma^*$, then $w=ubxat$ and $|u|=|w|-|v|-1=|w|-|s|-1=|t|.$
  \item [Type 2.] $w=sat=ubv$ where $|s|<|u|$. Let $u=say$ where $y\in\Sigma^*$, then $w=saybv$ and $|s|=|v|.$
\end{enumerate}
Therefore, $L_2=\{w\,|\,\exists u,v,z\in\Sigma^*, \ w=uazbv \ \text{or}\ w=ubzav,\ \text{where}\  |u|=|v|\}$.

The proof method of Theorem \ref{L-middle} will be employed. The prover will tell the verifier two positions -- the position of the symbol after $u$ and the position of the symbol before $v$. The verifier rotates its quantum state by an angle $\alpha$ every time it scans a symbol in $u$ and then checks whether symbols in these two positions, pointed out by the prover, are different.
If they are different, the quantum state of the verifier is rotated by an angle $-\alpha$ every time a symbol in $v$ is scanned. Otherwise, the input string is rejected.
The rest of the proof is similar to that in Theorem \ref{L-middle}.

\end{proof}

\section{Concluding remarks}

We have explored quantum interactive proof systems with 2QCFA verifiers and classical communication. We have focused on the public coin version of the interactive proof systems~--~namely QAM(2QCFA). We have showed that
QAM(2QCFA) are more powerful than their classical counterparts AM(2PFA). In particular, we have shown a number of  specific results demonstrating that: (1) The language $L_{middle}=\{xay\,|\, x,y \in \Sigma^*, \Sigma=\{a,b\}, |x|=|y|\}$ can be recognized by QAM(2QCFA) in a polynomial expected running time, but cannot be recognized by AM(2PFA) in polynomial
expected running time. (2) The language $L_{mpal}=\{xax^R\,|\, x\in \{a,b\}^* \}$
can be recognized by QAM(2QCFA) in an exponential expected running time, but cannot be recognized by AM(2PFA) at all.
(3) The 0~-~1 knapsack language can be recognized by QAM(2QCFA) in an exponential expected running time.

A related open problem, first mentioned in \cite{DwS92}, is whether AM(ptime-2PFA)  are
more powerful than 2PFA(ptime). Our attempts (and that of others) to show that
AM(ptime-2PFA) are indeed more powerful failed so far. However, the results of this paper indicate,
that the answer is likely positive when it comes to the quantum case.
Indeed, the language $L_{middle}$ can be  recognized by QAM(ptime-2QCFA),
but it seems that it can not be recognized by 2QCFA. That would
imply that QAM(ptime-2QCFA) are more powerful than 2QCFA.

\section*{Acknowledgements}

The authors would like to thank the referees for helpful suggestions to improve the presentation of the paper. This work is supported in part by the National
Natural Science Foundation of China (Nos. 61272058, 61073054) and
supported in part by Employment of Newly Graduated Doctors of Science for Scientific Excellence project/grant (CZ.1.07./2.3.00/30.0009)  of Czech Republic, and the project
of the FCT PEst-OE/EEI/LA0008/2013.

\section*{Appendix A: Proofs of Equalities \ref{ec1} and  \ref{ec2}}\label{details}

We prove Equality \ref{ec1} by induction on the length of string $b$. If $|b|=1$, then $b=``1"$. Therefore,
 \begin{equation}
 U_{1}|q_0\rangle=\frac{1}{\sqrt{6}}\left(
                                         \begin{array}{cccccccc}
                                           1 & 0 & 0 & -1 & 2 & 0 & 0 & 0 \\
                                           1 & 2 & 0 & 1 & 0 & 0 & 0 & 0 \\
                                           0 & 0 & 1 & 0 & 0 & 1 & 2 & 0 \\
                                           2 & -1 & 0 & 0 & -1 & 0 & 0 & 0 \\
                                           0 & 1 & 0 & -2 & -1 & 0 & 0 & 0 \\
                                           0 & 0 & 2 & 0 & 0 & 0 & -1 & 1 \\
                                           0 & 0 & -1 & 0 & 0 & 1 & 0 & 2 \\
                                           0 & 0 & 0 & 0 & 0 & -2 & 1 & 1 \\
                                         \end{array}
                                       \right)\left(\begin{array}{c}
                                                      1 \\
                                                      0 \\
                                                      0 \\
                                                      0 \\
                                                      0 \\
                                                      0 \\
                                                      0 \\
                                                      0
                                                    \end{array}
                                       \right)=\frac{1}{\sqrt{6}}\left(\begin{array}{c}
                                                      1 \\
                                                      1 \\
                                                      0 \\
                                                      2 \\
                                                      0 \\
                                                      0 \\
                                                      0 \\
                                                      0
                                                    \end{array}
                                       \right)
 \end{equation}
 and
  \begin{equation}
|\psi_{b}\rangle=\frac{P_fU_{1}|q_0\rangle}{\| P_fU_{1}|q_0\rangle\|}=\frac{1}{\sqrt{2}}\left(
                                                                                                              \begin{array}{c}
                                                                                                                1 \\

                                                                                                                1 \\
                                                                                                                0 \\
                                                                                                                \vdots \\
                                                                                                                0\\
                                                                                                              \end{array}
                                                                                                            \right)=
                                                                                                            \frac{1}{\sqrt{1+v(b)^2}}\left(
                                                                                                              \begin{array}{c}
                                                                                                                1 \\

                                                                                                                v(b) \\
                                                                                                                0 \\
                                                                                                                \vdots \\
                                                                                                                0\\
                                                                                                              \end{array}
                                                                                                            \right).
 \end{equation}

 Suppose that for $b=w$ and $|w|\geq 1$, the equality holds, that is
  \begin{equation}
|\psi_{w}\rangle=\frac{1}{\sqrt{1+v(w)^2}}\left(
                                                                                                              \begin{array}{c}
                                                                                                                1 \\
                                                                                                                v(w) \\
                                                                                                                0 \\
                                                                                                                \vdots \\
                                                                                                                0\\
                                                                                                              \end{array}
                                                                                                            \right).
 \end{equation}
 We prove the equation holds for $b=w\sigma$. If $\sigma=`0'$, then
 \begin{equation}
 U_{0}|\psi_w\rangle=\frac{1}{2}\left(
                  \begin{array}{cccccccc}
                    1 & 0 & 0 & -\frac{\sqrt{6}}{2} & \frac{\sqrt{6}}{2} & 0 & 0 & 0 \\
                    0 & 2 & 0 & 0 & 0 & 0 & 0 & 0 \\
                    0 & 0 & 1 & \frac{\sqrt{6}}{2} & \frac{\sqrt{6}}{2} & 0 & 0 & 0 \\
                    -\frac{\sqrt{6}}{2} & 0 & \frac{\sqrt{6}}{2} & -1 & 0 & 0 & 0 & 0 \\
                    \frac{\sqrt{6}}{2} & 0 & \frac{\sqrt{6}}{2} & 0 & -1 & 0 & 0 & 0 \\
                    0 & 0 & 0 & 0 & 0 & 2 & 0 & 0 \\
                    0 & 0 & 0 & 0 & 0 & 0 & 2 & 0 \\
                    0 & 0 & 0 & 0 & 0 & 0 & 0 & 2 \\
                  \end{array}
                \right)\cdot\frac{1}{\sqrt{1+v(w)^2}}\left(
                                                                                                              \begin{array}{c}
                                                                                                                1 \\
                                                                                                                v(w) \\
                                                                                                                0 \\
                                                                                                                 0 \\
                                                      0 \\
                                                      0 \\
                                                      0 \\
                                                                                                                0\\
                                                                                                              \end{array}
                                                                                                            \right)
                                       =
                                       \frac{1}{2(\sqrt{1+v(w)^2})}
                                       \left(\begin{array}{c}
                                                      1 \\
                                                      2v(w) \\
                                                      0 \\
                                                      -\frac{\sqrt{6}}{2} \\

                                                      \frac{\sqrt{6}}{2} \\
                                                      0 \\
                                                      0 \\
                                                      0
                                                    \end{array}
                                       \right)
 \end{equation}
 and
   \begin{equation}
|\psi_{w0}\rangle=\frac{P_fU_{0}|\psi_w\rangle}{\| U_{0}|\psi_w\rangle\|}=
                                                                                                          \frac{1}{\sqrt{1+(2v(w))^2}}\left(
                                                                                                              \begin{array}{c}
                                                                                                                1 \\

                                                                                                                2v(w) \\
                                                                                                                0 \\
                                                                                                                \vdots \\
                                                                                                                0\\
                                                                                                              \end{array}
                                                                                                            \right)=
                                                                                                          \frac{1}{\sqrt{1+(v(w0))^2}}\left(
                                                                                                              \begin{array}{c}
                                                                                                                1 \\

                                                                                                                v(w0) \\
                                                                                                                0 \\
                                                                                                                \vdots \\
                                                                                                                0\\
                                                                                                              \end{array}
                                                                                                            \right).
 \end{equation}
 If $\sigma=`1'$, then
  \begin{equation}
 U_{1}|\psi_w\rangle=\frac{1}{\sqrt{6}\sqrt{1+v(w)^2}}\left(
                                         \begin{array}{cccccccc}
                                           1 & 0 & 0 & -1 & 2 & 0 & 0 & 0 \\
                                           1 & 2 & 0 & 1 & 0 & 0 & 0 & 0 \\
                                           0 & 0 & 1 & 0 & 0 & 1 & 2 & 0 \\
                                           2 & -1 & 0 & 0 & -1 & 0 & 0 & 0 \\
                                           0 & 1 & 0 & -2 & -1 & 0 & 0 & 0 \\
                                           0 & 0 & 2 & 0 & 0 & 0 & -1 & 1 \\
                                           0 & 0 & -1 & 0 & 0 & 1 & 0 & 2 \\
                                           0 & 0 & 0 & 0 & 0 & -2 & 1 & 1 \\
                                         \end{array}
                                       \right)\left(
                                                                                                              \begin{array}{c}
                                                                                                                1 \\
                                                                                                                v(w) \\
                                                                                                                0 \\
                                                                                                                 0 \\
                                                      0 \\
                                                      0 \\
                                                      0 \\
                                                                                                                0\\
                                                                                                              \end{array}
                                                                                                            \right)
                                       =
                                       \frac{1}{\sqrt{6(1+v(w)^2)}}
                                       \left(\begin{array}{c}
                                                      1 \\
                                                      2v(w)+1 \\
                                                      0 \\
                                                      2-v(w) \\
                                                      v(w) \\
                                                      0 \\
                                                      0 \\
                                                      0
                                                    \end{array}
                                       \right)
 \end{equation}
 and
   \begin{equation}
|\psi_{w1}\rangle=\frac{P_fU_{1}|\psi_w\rangle}{\| U_{1}|\psi_w\rangle\|}=
                                                                                                          \frac{1}{\sqrt{1+(2v(w)+1)^2}}\left(
                                                                                                              \begin{array}{c}
                                                                                                                1 \\

                                                                                                                2v(w)+1 \\
                                                                                                                0 \\
                                                                                                                \vdots \\
                                                                                                                0\\
                                                                                                              \end{array}
                                                                                                            \right)=
                                                                                                          \frac{1}{\sqrt{1+(v(w1))^2}}\left(
                                                                                                              \begin{array}{c}
                                                                                                                1 \\

                                                                                                                v(w1) \\
                                                                                                                0 \\
                                                                                                                \vdots \\
                                                                                                                0\\
                                                                                                              \end{array}
                                                                                                            \right).
 \end{equation}
 The proof of Equality \ref{ec2} is similar.

 \end{document}